\documentclass[12pt, centerh1]{article}
\textwidth=165mm \headheight=0mm \headsep=10mm \topmargin=0mm
\textheight=220mm 
\oddsidemargin=0mm
\RequirePackage[colorlinks,citecolor=blue,urlcolor=blue]{hyperref}
\usepackage{amsmath, amssymb,natbib}
\usepackage{graphicx,bm}
\usepackage{color}
\usepackage{subcaption}
 \usepackage[table]{xcolor}
\usepackage{longtable}
\usepackage{amsthm}
\usepackage[mathscr]{euscript}
\usepackage{relsize}
\newcolumntype{P}[1]{>{\centering\arraybackslash}p{#1}}
\usepackage{rotating}
\usepackage{eurosym}
\usepackage{colonequals}
\usepackage{bbm}

\newcommand{\MM}{\bm{M}}
\newcommand{\MV}{\bm{V}}
\newcommand{\MU}{\bm{U}}

\newtheorem{theorem}{Theorem}

\title{Assessing and Visualizing Matrix Variate Normality} 
\author{\qquad Nikola Po\v cu\v ca \qquad\qquad\ Michael P.B.\ Gallaugher\\ Katharine M.\ Clark  \qquad\qquad Paul D.\ McNicholas}
\date{{\small Department of Mathematics and Statistics, McMaster University, Ontario, Canada.}}

\begin{document}

\maketitle

\begin{abstract}
A framework for assessing the matrix variate normality of three-way data is developed. The framework comprises a visual method and a goodness of fit test based on the Mahalanobis squared distance (MSD). The MSD of multivariate and matrix variate normal estimators, respectively, are used as an assessment tool for matrix variate normality. Specifically, these are used in the form of a distance-distance (DD) plot as a graphical method for visualizing matrix variate normality. In addition, we employ the popular Kolmogorov-Smirnov goodness of fit test in the context of assessing matrix variate normality for three-way data. Finally, an appropriate simulation study spanning a large range of dimensions and data sizes shows that for various settings, the test proves itself highly robust.\\[-10pt]

\noindent\textbf{Keywords}: DD plot, Mahalanobis distance, matrix normality test, matrix variate, three-way data.
\end{abstract}

\section{Introduction}
In recent years, dimensionality and quantity of data have become increasingly large. Three-way data have also become increasingly common, e.g., it is no longer uncommon to measure several quantities for each subject at each time point in a longitudinal study. Many approaches to analyzing two-way data are based on the multivariate normal distribution and much work has been done on assessing the normality of the two-way data. \cite{royston83} extend the univariate Shapiro-Wilk test for normality \citep{shap} to large samples of higher dimension, and \cite{mud} define a null distribution for the Mahalanobis squared distance (MSD) of multivariate normal data. In this paper, we expand these concepts to three-way data by developing approaches for testing matrix variate normality.  Through the use of MSD, our approach builds on the history of existing tests for multivariate normality and extends them into the space of matrix variate normality. In addition, our approach further visualizes matrix variate normality in the context of the relationship between the multivariate and matrix variate normal distributions. 

\section{Background}
\subsection{Matrix Variate Normal Distribution}
Two-way data can be regarded as the observation of $N$ vectors, whereas three-way data can be considered the observation of $N$ matrices. Common examples of three-way data include greyscale images and multivariate longitudinal data. Multivariate distributions have been successfully used in the analysis of two-way data and matrix variate distributions are gaining popularity for the analysis of three way data \citep[e.g.,][]{anderlucci15,MIXskew}. Similar to the multivariate case, the most mathematically tractable matrix variate distribution is the matrix variate normal distribution.
An $r\times c$ random matrix $\mathscr{X}$ comes from a matrix variate normal distribution if its density is of the form
 \newline
\begin{equation}
 \phi_{r\times c}({\bm X}~|~\MM, \MV, \MU ) = \frac{1}{(2\pi)^{ \frac{rc}{2} } |\MV|^{\frac{r}{2}} |\MU|^{\frac{c}{2}}}\exp \left\{- \frac{1}{2}\text{tr}\big(\MV^{-1}(\bm{X}-\MM)^{\top}\MU^{-1}(\bm{X}-\MM) \big) \right\}, 
 \label{eq:pdf}
 \end{equation}
where $\MM$ is the $r \times c$ mean matrix, $\MU$ is the $r \times r$ row covariance matrix, and $\MV$ is the $c \times c$ column covariance matrix.  
Note that the matrix variate normal distribution is related to the multivariate normal distribution via
\begin{equation}
\mathscr{X}\sim \mathcal{N}_{r\times c}(\MM, \MV, \MU) \iff \text{vec}(\mathscr{X})\sim \mathcal{N}_{rc}(\text{vec}(\MM), \MV \otimes \MU)
\label{eq:norm}
\end{equation}
\citep{gupta}, where $\otimes$ denotes the Kronecker product and $\text{vec}(\cdot)$ is the vectorization operator.
Note that there is an identifiability issue with regard to the parameters $\MU$ and $\MV$, i.e., if $c$ is a strictly positive constant, then $$\frac{1}{c}\MV\otimes c\MU=\MV\otimes\MU$$ and so replacing $\MU$ and $\MV$ by $(1/c)\MU$ and $c\MV$, respectively, leaves \eqref{eq:pdf} unchanged. Various different solutions have been proposed to resolve this issue, including setting $\text{tr}(\MU)=r$ or $\MU_{11}=1$ \citep{anderlucci15,MIXskew}.

\subsection{Mahalanobis Squared Distance}
For model interpretability, it is natural to impose a measure of distance between  an observation and a distribution of interest. The Mahalonobis distance is a well-established quantity in the literature \citep{mah}, and \cite{omah} illustrate its application in multivariate outlier detection and goodness of fit. Consider $N$ $p$-dimensional vectors $\bm{y}_1,\ldots,\bm{y}_N$ such that each $\bm{y}_i$ is a realization of a multivariate random variable $\bm{Y} \sim \mathcal{N}_{p}(\bm{\mu},\bm{\Sigma})$. The Mahalanobis squared distance (MSD) for $\bm{y}_i$ is 
\begin{align}
\mathcal{D}(\bm{y}_i, \bm{\mu},\bm{\Sigma}) =  \left(\bm{y}_i - \bm{\mu}   \right)^{\top} \bm{\Sigma}^{-1} \left(\bm{y}_i - \bm{\mu}   \right). \label{md}
\end{align}
It is well known \citep[see][]{mardia} that 
\begin{align}
 \mathcal{D}(\bm{y}_i, \bm{\mu},\bm{\Sigma})\sim \chi^2_p, \label{msd}
\end{align}
where $\chi^2_p$ is chi-square distributed with $p$ degrees of freedom. Now, consider the estimates 
$$\hat{\bm{\mu}}  = \frac{1}{N}\sum_{i=1}^N \bm{y}_i \qquad\text{and}\qquad  \hat{\bm{\Sigma}}  =  \frac{1}{N-1}\sum_{i=1}^N (\bm{y}_i- \hat{\bm{\mu}} )^{\top}( \bm{y}_i- \hat{\bm{\mu}} ),$$
then
\begin{equation}
\frac{N}{(N-1)^2}\mathcal{D}(\bm{y}_i, \bm{\hat{\mu}},\bm{\hat{\Sigma}})  \sim \text{Beta}\left( \frac{p}{2},\frac{N-p-1}{2} \right)\label{msdb}
\end{equation}
\citep{betaDist}. If one considers the estimated distribution for all MSDs within a given sample, then a goodness of fit test naturally presents itself, along with outlier detection and other statistical techniques, in the multivariate setting. 

\subsection{Kolmogorov-Smirnov Goodness of Fit Test}
The Kolmogorov-Smirnov (KS) test determines whether a sample comes from a specified distribution \citep{kstest}. Motivating the test in terms of MSDs, we define the test where the distances compose a sample from the reference distribution.  The empirical distribution function for independent and identically distributed ordered $\mathcal{D}_1,\ldots,\mathcal{D}_N$ is given by
$$ F_N(D) = \frac{1}{N}\sum_{i=1}^N \mathbbm{1}_{(-\infty, D]} \left(\mathcal{D}_i \right),$$
where $$\mathbbm{1}_{(-\infty, D]}=\begin{cases}1 & \text{if } \mathcal{D}_i \leq D,\\ 0 & \text{otherwise}.\end{cases}$$ By definition, the KS statistic for a given cumulative distribution function $F(D)$ is 
$$\mathbb{D}_N = \underset{D}\sup|F_N(D) - F(D) |. $$
This test statistic can be modified to account for two samples, which is appropriate because we are comparing distances from two different estimates which theoretically converge to the same distribution. The two-sample KS test statistic is
$$\mathbb{D}_{N_a,N_b} = \underset{D}\sup|F_{1,N_a}(D) - F_{2,N_b}(D) |,$$
where $N_a$ and $N_b$ are the numbers of observations in samples $a$ and $b$, respectively, and $F_{1,N_a}$ and $F_{2,N_b}$ denote the cumulative distribution functions defined above for the first and second samples, respectively. The two-sample KS test considers the hypotheses: 
\begin{align*}
H_0: &\text{ Both samples come from the same distribution} \\
H_a: & \text{ Both samples do not come from the same distribution.}
\end{align*}
The null hypothesis is rejected at significance level $\alpha$ if 
$$\mathbb{D}_{N_a,N_b} > \sqrt{-\frac{1}{2}\log(\alpha) \left(  \frac{N_a + N_b}{N_a N_b}  \right)  } .$$
This two-sample test will be used to test matrix variate normality by comparing the distributions of the two distances described in Section~\ref{sec:dd}. The main benefit of this approach is that it does not specify the common distribution between the two samples. As a result, it is not as a powerful as the original KS test because it is sensitive against all possible types of differences between two distribution functions \citep{notp}. 

\section{Methodology}
\subsection{Distance-Distance Plot}\label{sec:dd}
We propose a new \textit{post hoc} method for visually assessing the matrix variate structure of a dataset. Consider $N$ $r\times c$ matrices $\bm{X}_1,\ldots,\bm{X}_N$ such that each $\bm{X}_i$ is a realization of a matrix variate random variable $\mathscr{X}\sim \mathcal{N}_{r\times c}(\MM,\MV,\MU)$. Recall the relationship between the matrix variate normal and multivariate normal distributions \eqref{eq:norm}, and
let $\bm{\mu} = \text{vec}(\MM)$ and $\bm{\Sigma} =  \MV \otimes \MU$. Consider the estimates
$$\hat{\bm{\mu}}= \sum_{i=1}^N \frac{\text{vec}(\bm{X}_i)}{N}\qquad\text{and}\qquad\hat{\bm{\Sigma}} = \frac{1}{N-1}\sum_{i=1}^N \left\{\text{vec}(\bm{X}_i) - \hat{\bm{\mu}}\right\}\left\{\text{vec}(\bm{X}_i) - \hat{\bm{\mu}}\right\}^{\top}.$$ Suppose now we calculate the MSD for each observation $\bm{X}_i$ in a given sample as follows: 
\begin{align}
\mathcal{D}(\bm{X}_i, \hat{\bm\mu}, \hat{\bm\Sigma} ) = \left\{\text{vec}(\bm{X}_i) -  \hat{\bm\mu}\right\}^{\top} \hat{\bm\Sigma}^{-1} \left\{\text{vec}(\bm{X}_i) - \hat{\bm\mu}\right\}. \label{msdOrg}
\end{align}
We have that 
\begin{align} \frac{N}{(N-1)^2} \mathcal{D}(\bm{X}_i, \hat{\bm\mu}, \hat{\bm\Sigma} ) \sim \text{Beta}\left(\frac{rc}{2},\frac{N-rc-1}{2}\right). \label{multiMSD}
\end{align}
Moreover, the maximum likelihood estimates for $\MM$,  $\MU$ and $\MV$ are 
\begin{align}
\hat{\MM} &= \frac{1}{N}\sum_{i=1}^N{\bm{X}_i} \label{e1},\\
\hat{\MU} &= \frac{1}{cN}\sum_{i=1}^N \big(\bm{X}_i - \hat{\MM}\big) \hat{\MV}^{-1}\big(\bm{X}_i - \hat{\MM} \big)^{\top} \label{e2},\\
\hat{\MV} &=\frac{1}{rN}\sum_{i=1}^N \big(\bm{X}_i - \hat{\MM}\big)^{\top}  \hat{\MU}^{-1} \big(\bm{X}_i - \hat{\MM}\big) \label{e3}.
\end{align}
Now, let 
\begin{equation}\label{eqn:dm}
\mathcal{D}_M(\bm{X}_i,\MM,\MV,\MU)=\text{tr}\left\{\MU^{-1}(\bm{X}_i - \MM)\MV^{-1}(\bm{X}_i - \MM)^{\top} \right\}.
\end{equation}
This quantity in \eqref{eqn:dm} can be viewed as the matrix variate version of the MSD, and we will use this terminology when referring to this quantity. We now have all the necessary notation to present the following theorem.

\begin{theorem}
If a Kronecker product structure exists for $\bm{\Sigma}$, then 
\begin{align}
\mathcal{D}(\bm{X}_i, \bm{\mu}, \bm{\Sigma} )  &= \mathcal{D}_M(\bm{X}_i, \MM, \MU, \MV)  \label{p3}, \\
 \mathcal{D}_M(\bm{X}_i, \hat{\MM} , \hat{\MU} , \hat{\MV}) &\overset{P}{\longrightarrow} \mathcal{D}_M(\bm{X}_i, \MM , \MU , \MV)\label{p1}, \\ 
 \mathcal{D}(\bm{X}_i, \hat{\bm\mu}, \hat{\bm\Sigma} ) &\overset{P}{\longrightarrow} \mathcal{D}(\bm{X}_i, \bm\mu, \bm\Sigma ) \label{p2}, 
 \end{align}
where $ \overset{P} \longrightarrow$ denotes convergence in probability.
\end{theorem}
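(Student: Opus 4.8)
The plan is to establish the deterministic identity \eqref{p3} first, since it underpins the two limit statements \eqref{p1} and \eqref{p2}. For \eqref{p3} no distributional assumption is needed, only the Kronecker structure $\bm{\Sigma}=\MV\otimes\MU$. Writing $\bm{A}_i=\bm{X}_i-\MM$, I would combine $(\MV\otimes\MU)^{-1}=\MV^{-1}\otimes\MU^{-1}$, the vectorization identity $\text{vec}(\MU^{-1}\bm{A}_i\MV^{-1})=(\MV^{-1}\otimes\MU^{-1})\text{vec}(\bm{A}_i)$ (using that $\MV^{-1}$ is symmetric), and $\text{vec}(\bm{P})^{\top}\text{vec}(\bm{Q})=\text{tr}(\bm{P}^{\top}\bm{Q})$ to obtain
\begin{align*}
\mathcal{D}(\bm{X}_i,\bm{\mu},\bm{\Sigma})
&=\text{vec}(\bm{A}_i)^{\top}(\MV^{-1}\otimes\MU^{-1})\text{vec}(\bm{A}_i)
=\text{vec}(\bm{A}_i)^{\top}\text{vec}(\MU^{-1}\bm{A}_i\MV^{-1})\\
&=\text{tr}(\bm{A}_i^{\top}\MU^{-1}\bm{A}_i\MV^{-1})
=\text{tr}(\MU^{-1}\bm{A}_i\MV^{-1}\bm{A}_i^{\top}),
\end{align*}
the last step by the cyclic property of the trace; the right-hand side is $\mathcal{D}_M(\bm{X}_i,\MM,\MV,\MU)$ as in \eqref{eqn:dm}. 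The very same computation with $\hat{\MM},\hat{\MU},\hat{\MV}$ replacing $\MM,\MU,\MV$ gives the bridge identity $\mathcal{D}_M(\bm{X}_i,\hat{\MM},\hat{\MU},\hat{\MV})=\mathcal{D}(\bm{X}_i,\hat{\bm{\mu}},\hat{\MV}\otimes\hat{\MU})$, where I used $\text{vec}(\hat{\MM})=\hat{\bm{\mu}}$; I would keep this for \eqref{p1}.

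For \eqref{p2}, I would hold $\bm{X}_i$ fixed and let $N\to\infty$, noting that the single term $\bm{X}_i$ contributes only $O(1/N)$ to $\hat{\bm{\mu}}$ and $\hat{\bm{\Sigma}}$ and is thus asymptotically negligible. Because the matrix variate normal has finite moments of all orders, the law of large numbers gives $\hat{\bm{\mu}}\overset{P}{\to}\bm{\mu}$ and $\hat{\bm{\Sigma}}\overset{P}{\to}\bm{\Sigma}$; since $\bm{\Sigma}$ is positive definite, the map $(\bm{m},\bm{S})\mapsto\{\text{vec}(\bm{X}_i)-\bm{m}\}^{\top}\bm{S}^{-1}\{\text{vec}(\bm{X}_i)-\bm{m}\}$ is continuous at $(\bm{\mu},\bm{\Sigma})$, and the continuous mapping theorem delivers \eqref{p2}.

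For \eqref{p1}, the bridge identity reduces the claim to $\hat{\bm{\mu}}\overset{P}{\to}\bm{\mu}$ (already in hand) together with $\hat{\MV}\otimes\hat{\MU}\overset{P}{\to}\MV\otimes\MU$, after which the continuous mapping argument from \eqref{p2} applies verbatim. Both $\mathcal{D}_M$ and the product $\MV\otimes\MU$ are invariant under the rescaling $(\MU,\MV)\mapsto(a\MU,a^{-1}\MV)$, so the identifiability issue flagged after \eqref{eq:norm} is immaterial: under any one of the standard normalizations the fixed-point estimators \eqref{e2}--\eqref{e3} are consistent for $\MU$ and $\MV$ separately, whence $\hat{\MV}\otimes\hat{\MU}\overset{P}{\to}\MV\otimes\MU$ by continuity of the Kronecker product. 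The only genuinely non-routine ingredient is this consistency of the matrix variate normal maximum likelihood estimators $\hat{\MU},\hat{\MV}$, which are defined only implicitly through the coupled equations \eqref{e2}--\eqref{e3} and only up to a scalar; I would settle it either by citing an existing consistency result for the matrix normal MLE under a fixed normalization, or by working directly with the normalization-free product $\hat{\MV}\otimes\hat{\MU}$, the MLE of $\bm{\Sigma}$ in the Kronecker-restricted model. Everything else is just the law of large numbers and the continuous mapping theorem.
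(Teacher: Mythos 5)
Your proposal is correct and follows essentially the same route as the paper: the vec/Kronecker/trace computation for \eqref{p3} (the paper relegates it to an appendix), and consistency of the estimators combined with the continuous mapping theorem for \eqref{p1} and \eqref{p2}, with the identifiability of $\MU,\MV$ handled exactly as the paper does, via uniqueness of the product $\MV\otimes\MU$. If anything you are slightly more careful than the paper, in noting that $\bm{X}_i$ itself enters the estimators only at order $O(1/N)$ and in flagging that consistency of the implicitly defined MLEs $\hat{\MU},\hat{\MV}$ is the one ingredient requiring an external citation.
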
 

\begin{proof}
Result \eqref{p3} is trivial and follows directly from \eqref{eq:norm}---for completeness, details are given in Appendix~\ref{app:trivial}.
Now, we prove the result in \eqref{p1}. Note that  $$\hat{\MM} = \frac{1}{N}\sum_{i=1}^N{\bm{X}_i}$$ is the MLE for the mean matrix. Because the matrix variate normal distribution is part of the exponential family \citep{gupta}, all MLEs exist and are consistent \citep{gupta2}. Therefore, $ \hat{\MM}  \overset{P}{\to} \MM$. As mentioned previously, the estimates of the scale matrices are unique only up to a strictly positive multiplicative constant; however, their Kronecker product, $\MV\otimes \MU \equalscolon \bm\Sigma $ is unique. Therefore
$$\hat{\MV} \otimes \hat{\MU} \overset{P}{\longrightarrow} \MV \otimes \MU = \bm{\Sigma}.$$
From these two results, and the continuous mapping theorem (stated as Theorem~\ref{thm:cmt} in Appendix~\ref{app:cmt}), we have $$\mathcal{D}_M(\bm{X}_i, \hat{\MM} , \hat{\MU} , \hat{\MV}) \overset{P} \longrightarrow \mathcal{D}_M(\bm{X}_i, \MM , \MU , \MV).$$
Proceeding to the proof of \eqref{p2}, note that the multivariate normal distribution is a member of the exponential family \citep{gupta}. Therefore, the unbiased estimates $\hat{\bm{\mu}}$ and $\hat{\bm{\Sigma}}$ converge in probability to the true parameters $\bm\mu$ and $\bm \Sigma$, respectively. From the continuous mapping theorem, it follows that
$$\left\{\text{vec}(\bm{X}_i) -  \hat{\bm\mu}\right\}^{\top} \hat{\bm\Sigma}^{-1} \left\{\text{vec}(\bm{X}_i) -  \hat{\bm\mu}\right\} 
 \overset{P}{\longrightarrow}   \left\{\text{vec}(\bm{X}_i) -  \bm{\mu}\right\}^{\top} \bm{\Sigma}^{-1} \left\{\text{vec}(\bm{X}_i) - \bm{\mu}\right\},$$
i.e., $$\mathcal{D}(\bm{X}_i, \hat{\bm\mu}, \hat{\bm\Sigma})
 \overset{P}{\longrightarrow} \mathcal{D}(\bm{X}_i, \bm\mu, \bm\Sigma).$$
\end{proof}

From of the results in Theorem 1, it seems useful to visualize matrix variate normality by comparing the estimated MSDs. Consider a plot of $\mathcal{D}$ versus $\mathcal{D}_M$, which we will refer to as the distance-distance (DD) plot. The DD plot is a scatter plot of the Mahalanobis distances using the estimated parameters from the multivariate and matrix variate normal distribution, respectively. Figure~\ref{ddplotEx} illustrates the visual approach to determining the matrix variate normal structure. On the left, we have the DD plot for a matrix variate normal structure with a red line at $\mathcal{D}= \mathcal{D}_M$ for reference. Note that the distances lie roughly along the line with little variability between the multivariate and matrix variate MSDs. On the right side, however, we have that the MSDs exhibit more variability and do not lie along the reference line---this because the data were simulated from a strictly multivariate normal distribution, i.e.,  without a Kronecker product covariance structure. 
\begin{figure}[!htb]
\begin{center}
\includegraphics[scale=0.38]{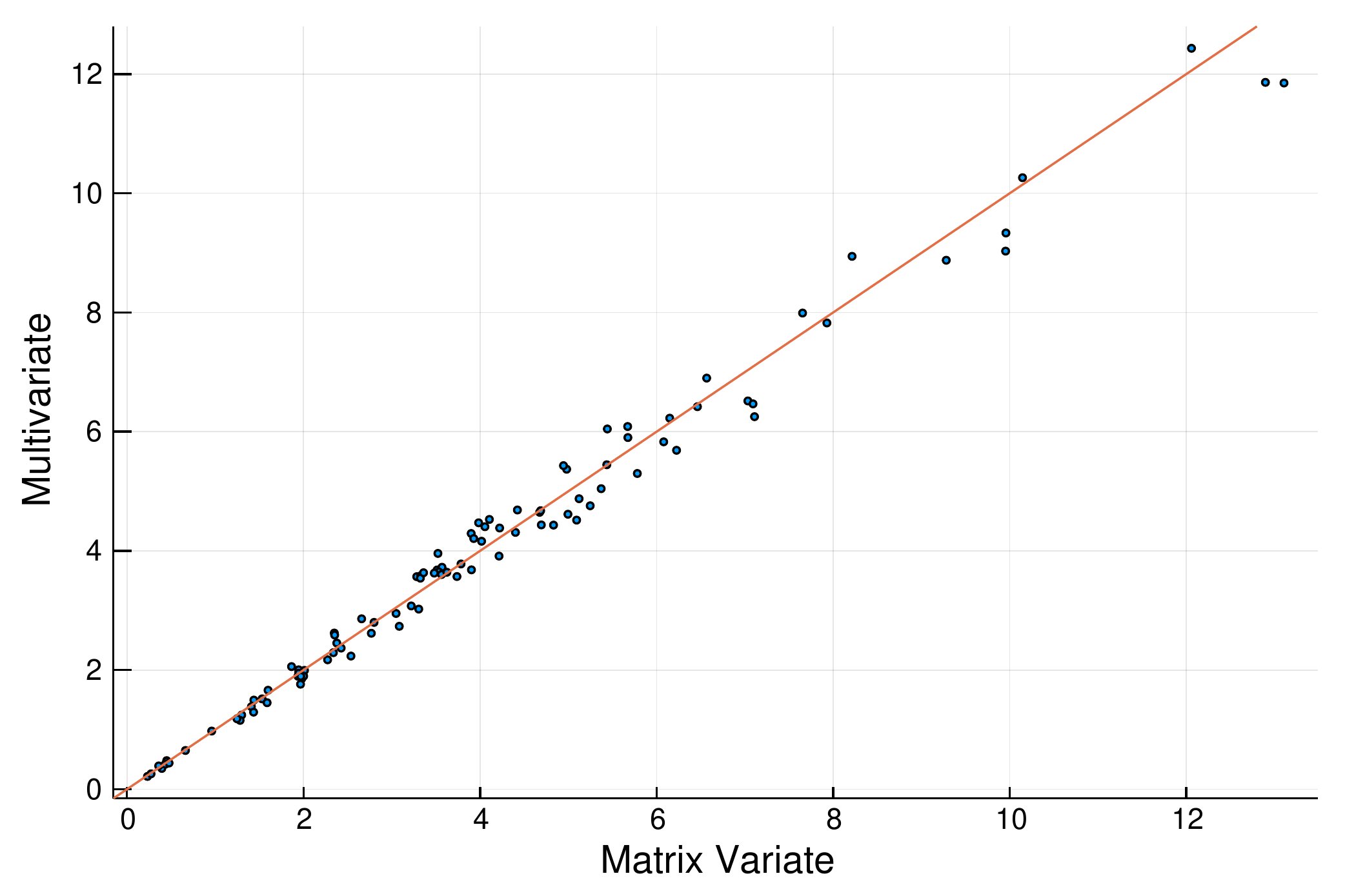}\ \ \ 
\includegraphics[scale=0.38]{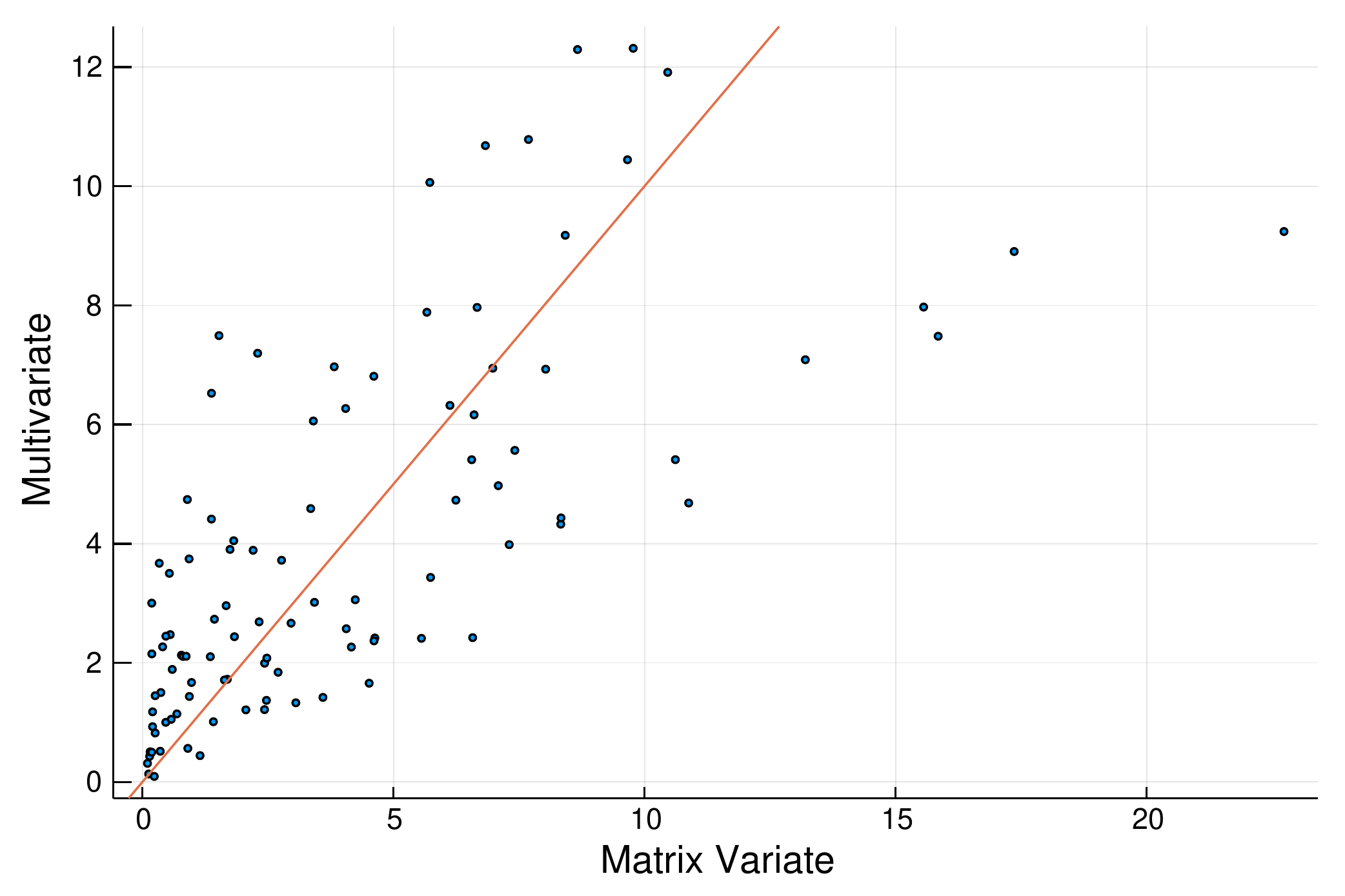}
\caption{DD plots for simulated data ($N=100,p=4$) with randomly chosen mean and variance parameters, indicating the presence (left) and absence (right) of a matrix variate normal structure, i.e., of a Kronecker product covariance structure in the multivariate case.}
\label{ddplotEx}
\end{center}
\end{figure}

\subsection{Test for Matrix Variate Normality}
Once the distances have all been calculated, we can perform the KS test at significance level~$\alpha$. The test for matrix variate normality is defined as 
\begin{align*}
H_0: & \text{ $\mathcal{D}$ and $\mathcal{D}_M$ come from the same distribution.} \\
H_a: & \text{ $\mathcal{D}$ and $\mathcal{D}_M$  do not come from the same distribution.}
\end{align*}
The null hypothesis is rejected if 
$$\mathbb{D}_{N,N}  > \sqrt{-\frac{1}{2}\log(\alpha) \left(  \frac{2N}{N^2}\right)}.$$ 
Because both distances are estimated from the  same sample, they must theoretically converge to the same distribution. Rejection of the null hypothesis indicates that a Kronecker product covariance structure is not present, which violates \eqref{eq:norm}. 

\section{Simulation Study}
\subsection{Overview}
The methodology was implemented in the Julia programming language  \citep{julia,mcnicholas19} and is available within the \texttt{MatrixVariate.jl} package \citep{pocuca19}. 
The simulation study is divided into two separate sections. The first section investigates the effect of dimensionality and sample size on DD plots. Within the first set of investigations, we look at the effect of dimensionality while keeping the number of observations constant. Then, we investigate the effect of sample size while keeping dimensionality constant. The second section investigates the effect on the KS-test performance under the same simulation scheme.  

\subsection{Simulation study for DD plots}
We investigate the effect of dimensionality on our approach for visualizing matrix variate normality. The number of observations is set at $N=1000$ while dimension takes the values $p \in \{4,100,400\}$.  Figure~\ref{fig:dimvary} shows the DD plots for each case, with each row having a different dimension $p$. Figures~\ref{fig:ddplot2M}, \ref{fig:ddplot10M} and~\ref{fig:ddplot20M} show the DD plots for data with a matrix variate normal structure, and Figures~\ref{fig:ddplot2}, \ref{fig:ddplot10} and~\ref{fig:ddplot20} correspond to data that are strictly multivariate normal and not matrix variate normal.  As expected, in Figures~\ref{fig:ddplot2M} and \ref{fig:ddplot10M}, the matrix variate and multivariate MSDs coincide with one another and, as one would expect, the variability about the reference line increases with dimensionality. This is consistent with null distribution from \eqref{msdb} as the spread of the MSDs increases with dimensionality. In Figure~\ref{fig:ddplot20M}, when $p=400$, the plot becomes skewed and no longer exactly follows the reference line. This is not particularly surprising because $p$ is now relatively large relative to $N$. In contrast, Figures~ \ref{fig:ddplot2}, \ref{fig:ddplot10}, and \ref{fig:ddplot20} show highly variable and random MSDs. As the data for the plots on the right are strictly multivariate normal and not matrix variate normal, the MSDs should not and do not coincide with one another.\begin{figure*}[!htb]
\centering
    \begin{subfigure}[t]{0.45\textwidth}
        \centering
         \caption{$N=1000, p=4$}
        \includegraphics[width=\textwidth]{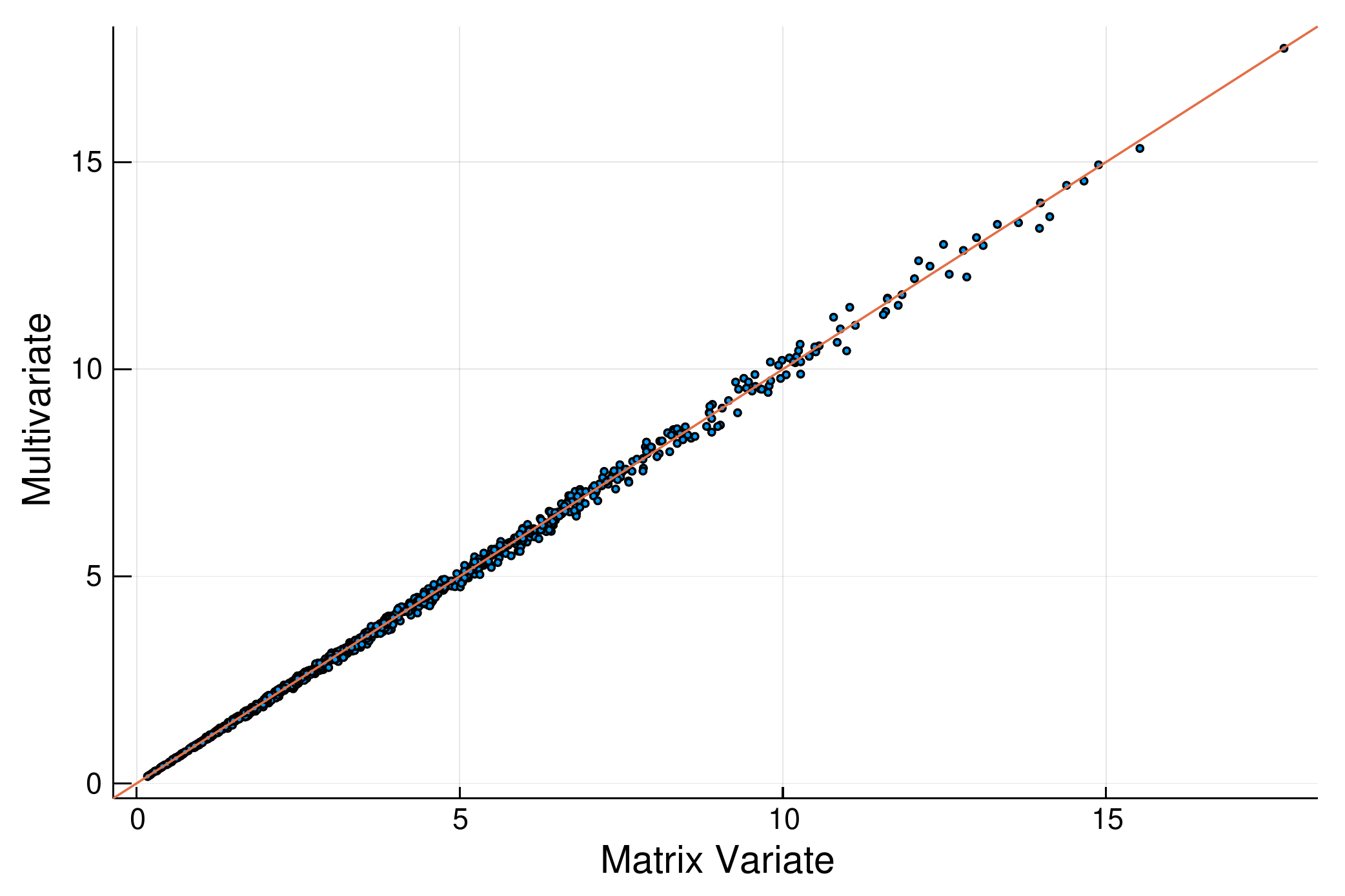}
        \label{fig:ddplot2M}
    \end{subfigure}%
    ~ 
    \begin{subfigure}[t]{0.45\textwidth}
        \centering
        \caption{$N=1000, p=4$}
        \includegraphics[width=\textwidth]{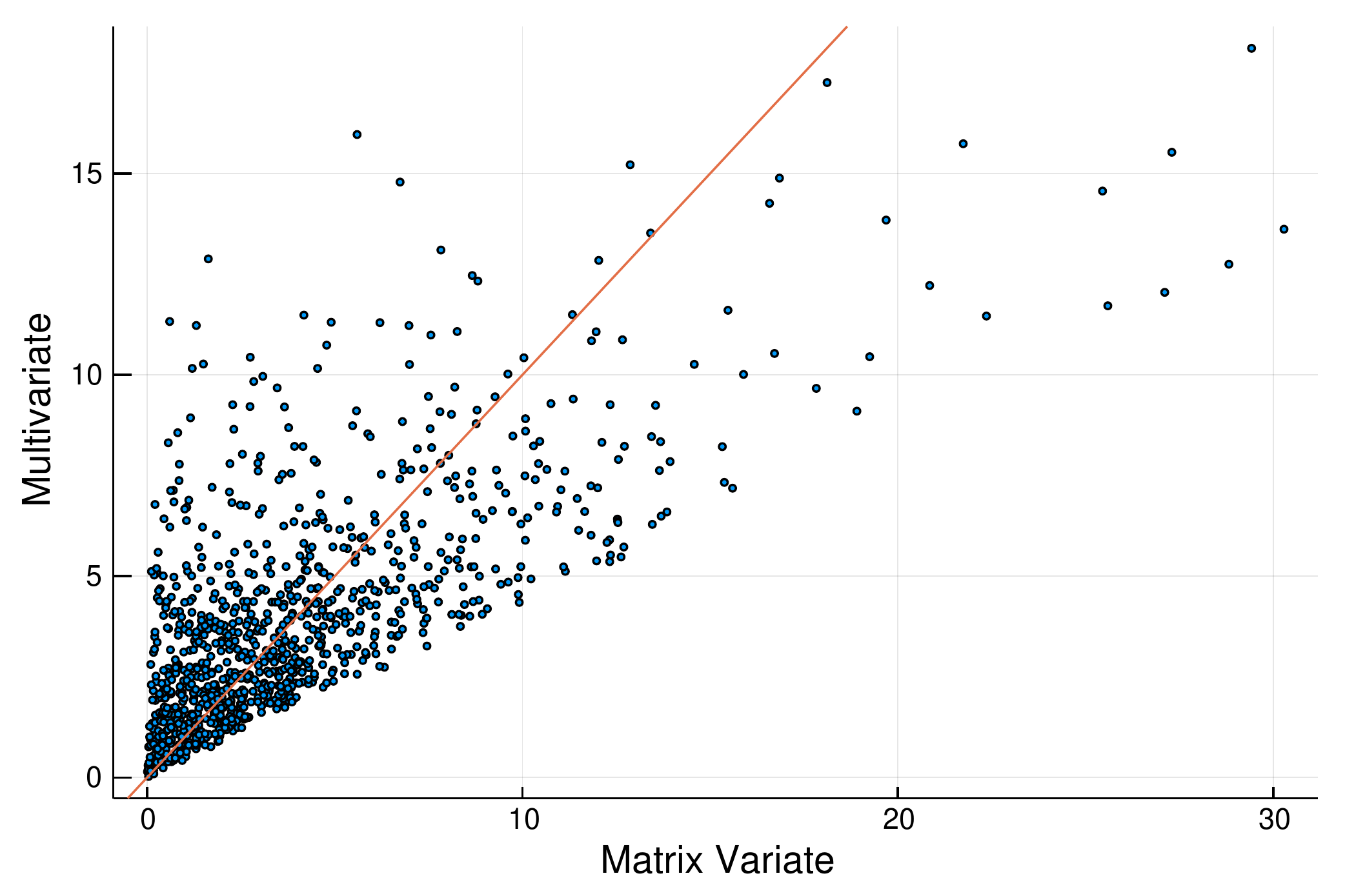}
        \label{fig:ddplot2}
    \end{subfigure}
    \vspace{-0.15in}
    
    \begin{subfigure}[t]{0.45\textwidth}
        \centering
        \caption{$N=1000, p=100$}
        \includegraphics[width=\textwidth]{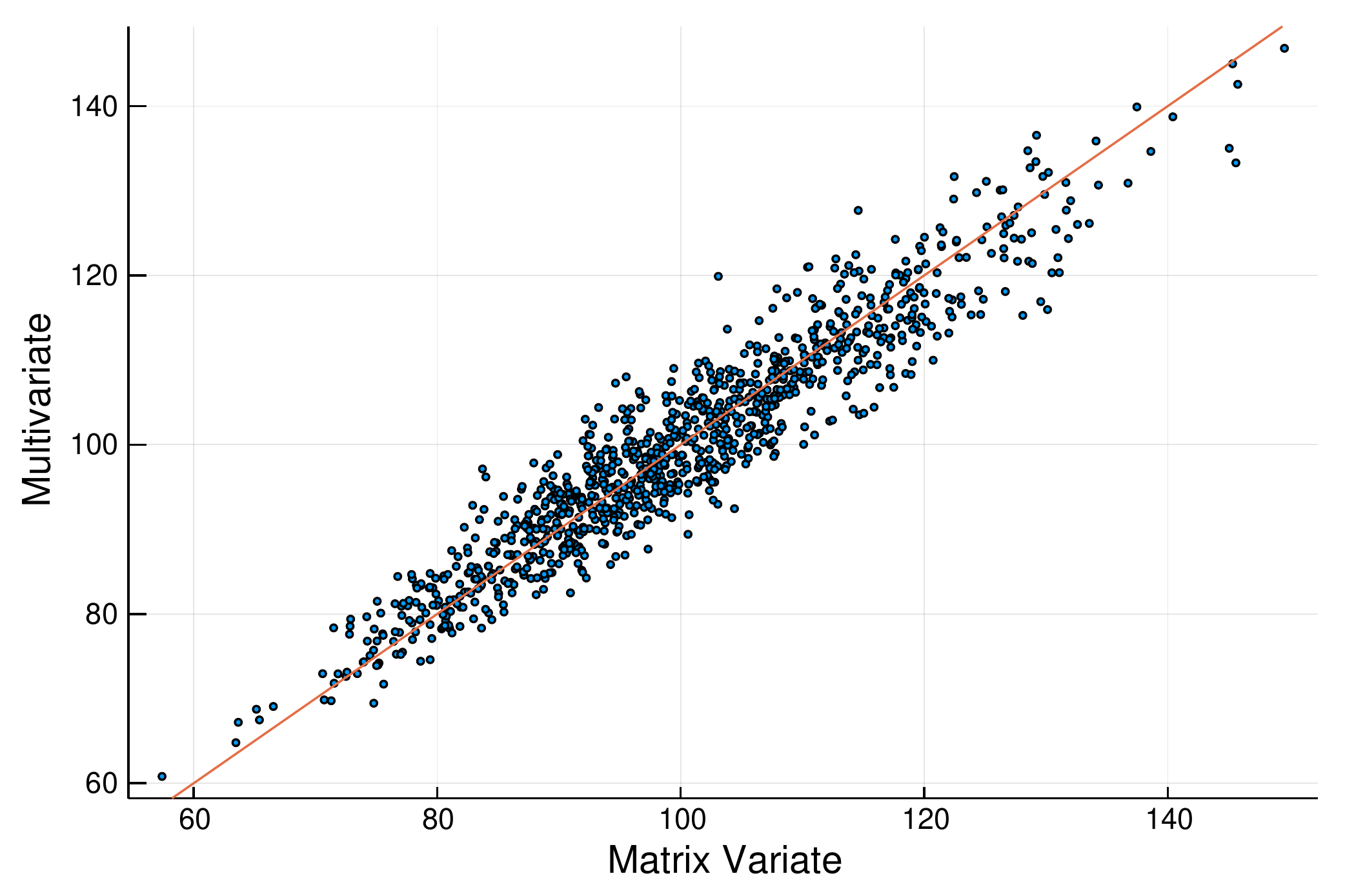}
         \label{fig:ddplot10M}
    \end{subfigure}%
    ~ 
    \begin{subfigure}[t]{0.45\textwidth}
        \centering
        \caption{$N=1000, p=100$}
        \includegraphics[width=\textwidth]{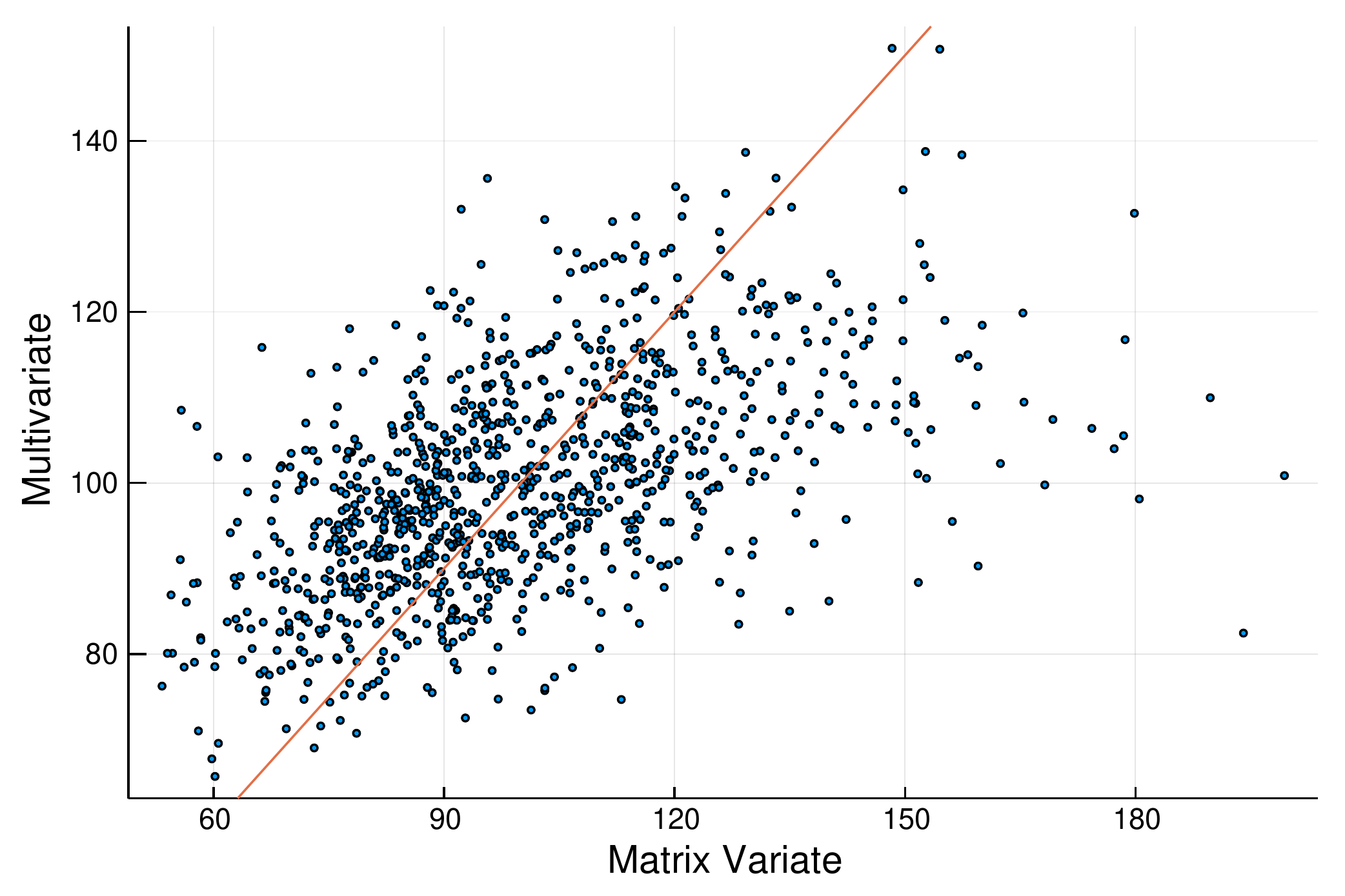}
        \label{fig:ddplot10}
    \end{subfigure}
    \vspace{-0.15in}
    
    \begin{subfigure}[t]{0.45\textwidth}
        \centering
        \caption{$N=1000, p=400$}
        \includegraphics[width=\textwidth]{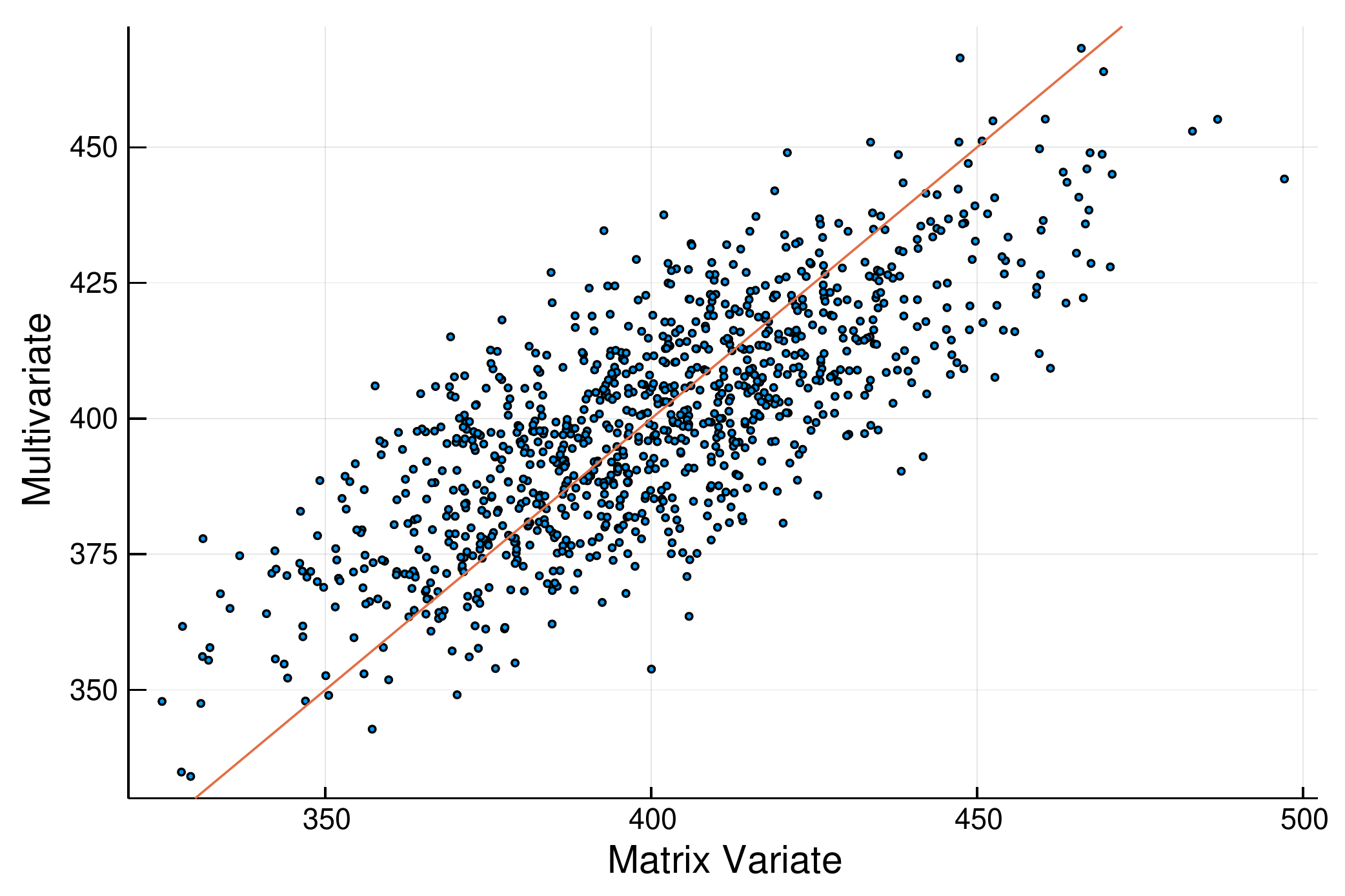}
        \label{fig:ddplot20M}
    \end{subfigure}%
    ~ 
    \begin{subfigure}[t]{0.45\textwidth}
        \centering
        \caption{$N=1000, p=400$}
        \includegraphics[width=\textwidth]{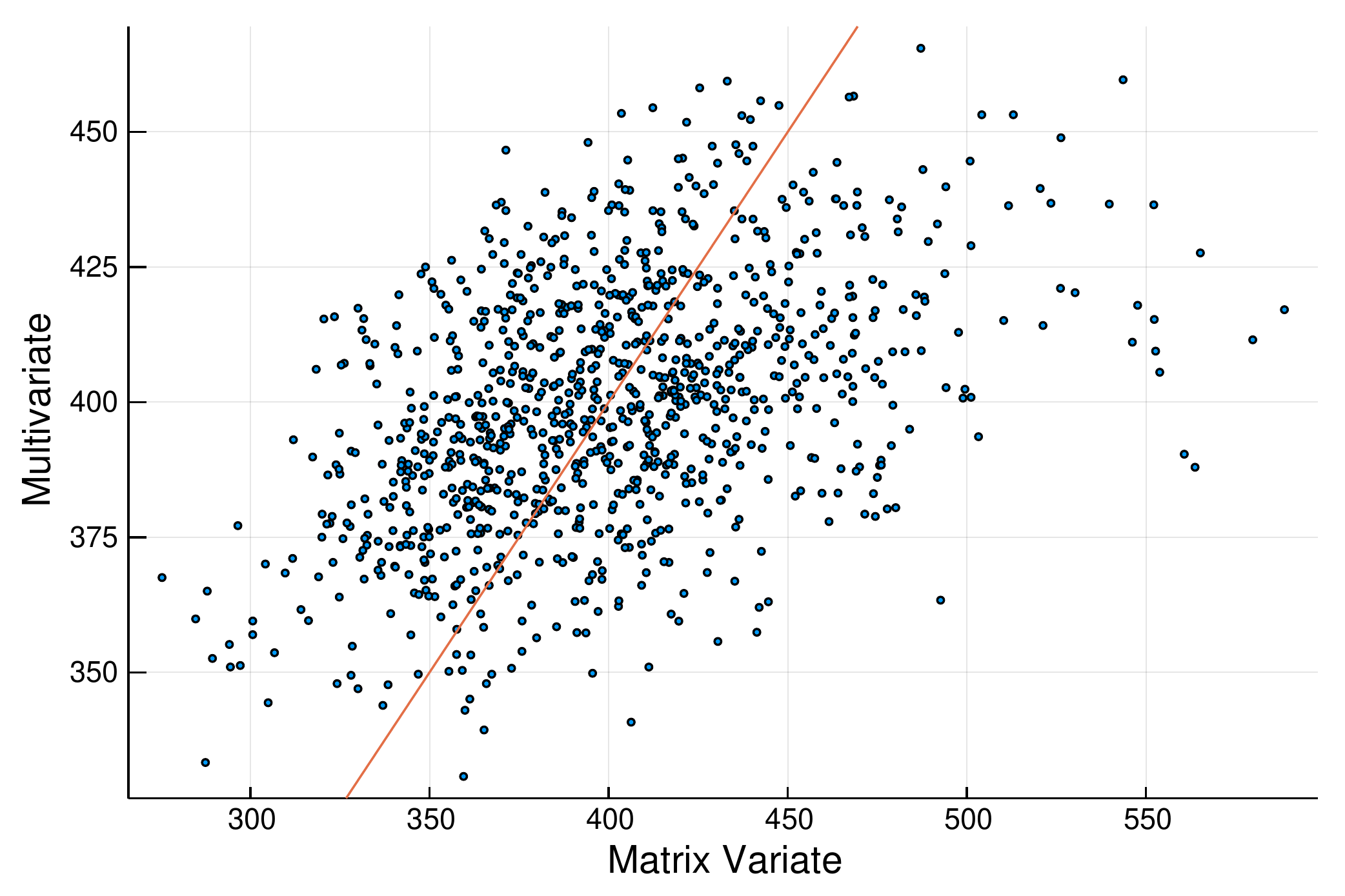}
        \label{fig:ddplot20}
    \end{subfigure}
        \vspace{-0.15in}

    \caption{DD plots for simulated data with $N=1000$ and $p\in \{4,100,400\}$ where matrix variate normal structure is present (left) or absent (right).}
    \label{fig:dimvary}
\end{figure*}

The second part varies the sample size while keeping the dimension constant. In these simulations, we set the dimension of the generated random vectors to $p=100$. Figure~\ref{fig:Nvary} displays an array of DD plots for matrix normal and multivariate normal datasets when $p=100$ and $N\in \{500,2000,10000\}$. Similar to the first investigation, Figures~\ref{fig:ddplot500M}, \ref{fig:ddplot2000M} and~\ref{fig:ddplot10000M} represent datasets which are matrix variate normal, and Figures~\ref{fig:ddplot500}, \ref{fig:ddplot2000} and~\ref{fig:ddplot10000} represent datasets which are multivariate normal but not matrix variate normal. The plots demonstrate that the distances from matrix variate normal data follow the reference line with some random variability, which decreases as dimension increases. When data are strictly multivariate normal and the matrix variate structure is absent, the variability is large, the distances are skewed, and the MSDs diverge from the reference line. This indicates that our approach is highly consistent for a large variety of sample sizes while keeping dimensionality constant. 
\begin{figure*}[!htb]
\centering
    \begin{subfigure}[t]{0.45\textwidth}
        \centering
        \caption{$N=500, p=100$}
        \includegraphics[width=\textwidth]{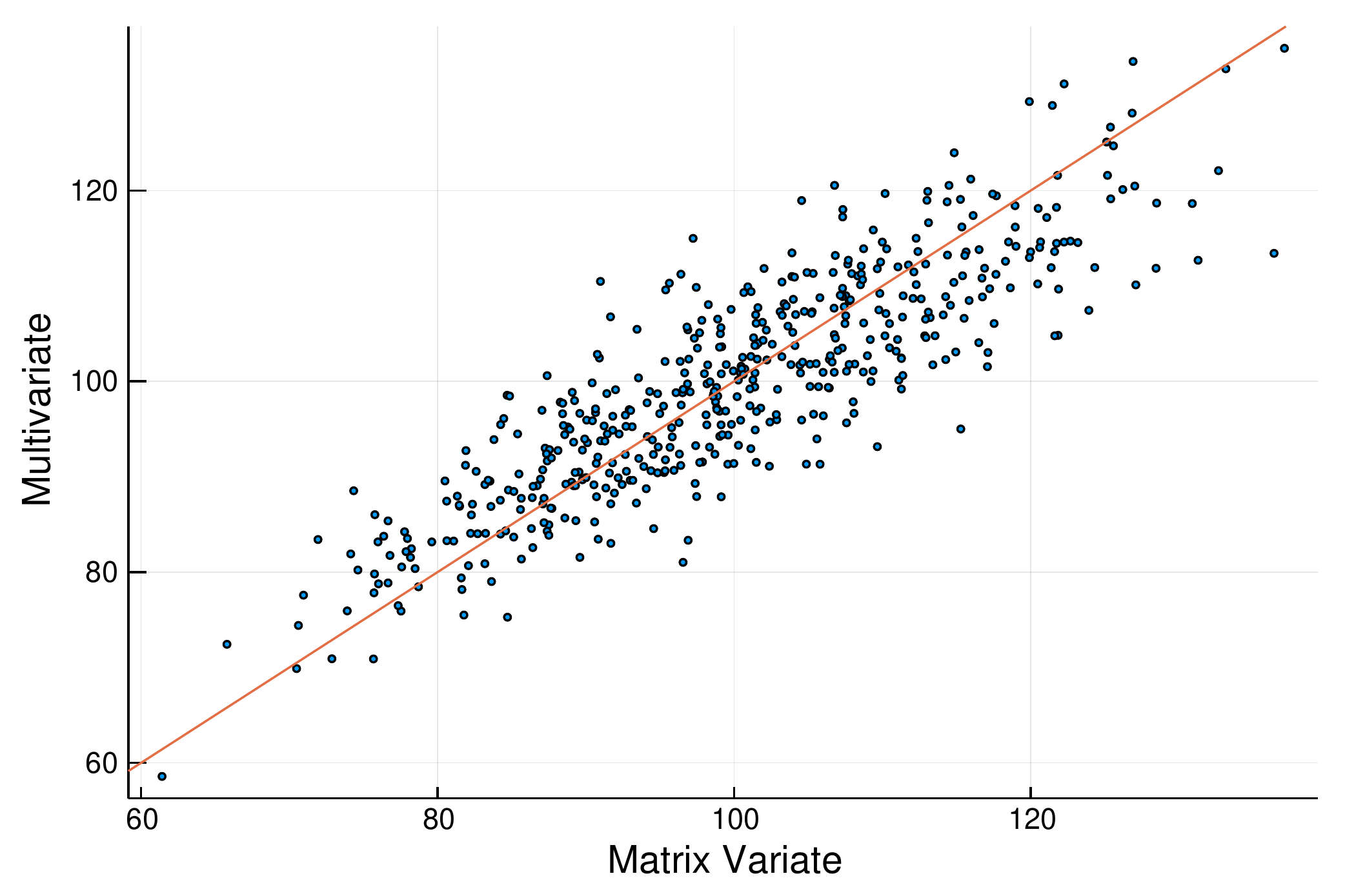}
        \label{fig:ddplot500M}
    \end{subfigure}%
    ~ 
    \begin{subfigure}[t]{0.45\textwidth}
        \centering
        \caption{$N=500, p=100$}
        \includegraphics[width=\textwidth]{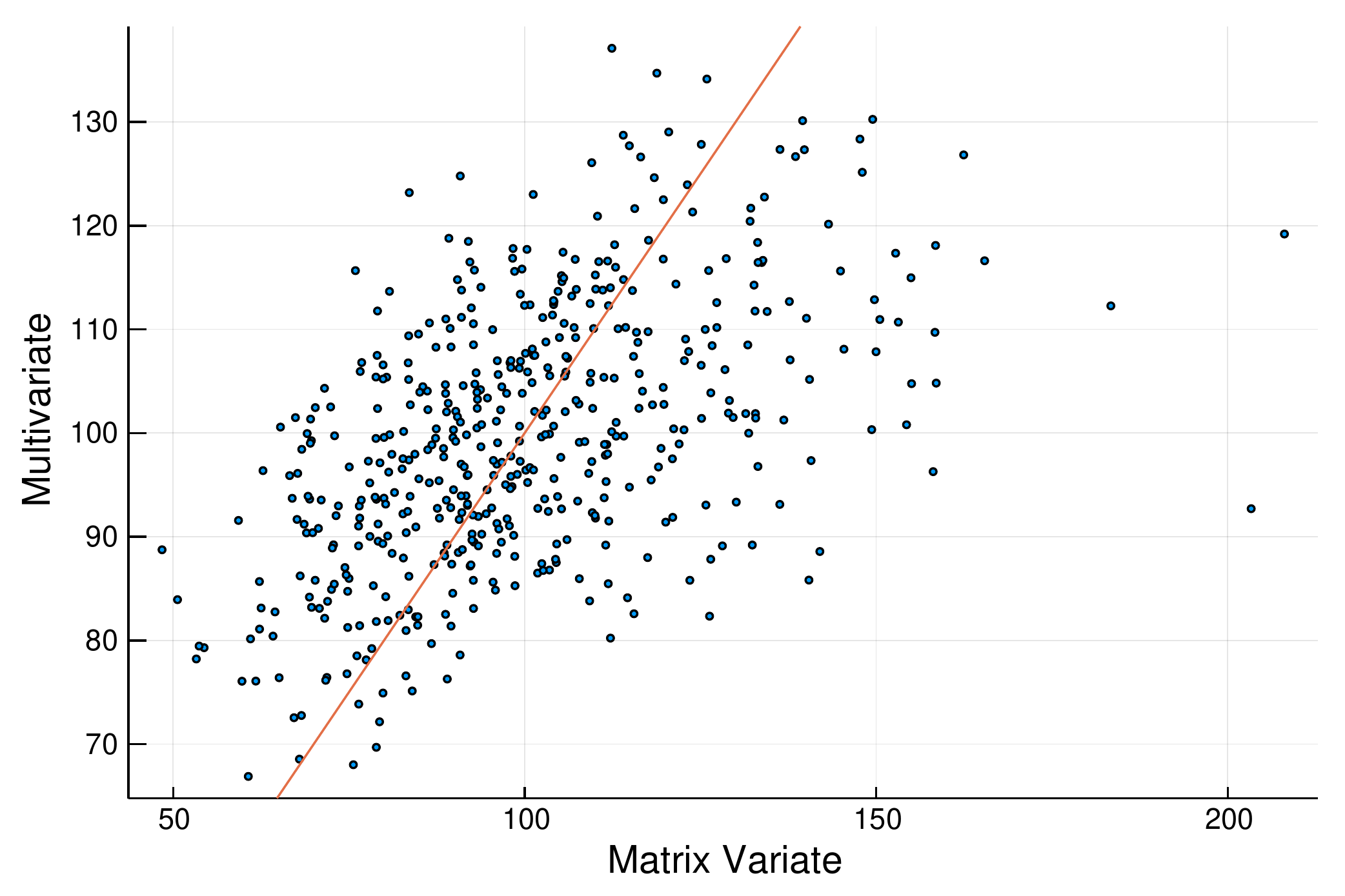}
        \label{fig:ddplot500}
    \end{subfigure}
    \vspace{-0.15in}
    
    \begin{subfigure}[t]{0.45\textwidth}
        \centering
        \caption{$N=2000, p=100$}
        \includegraphics[width=\textwidth]{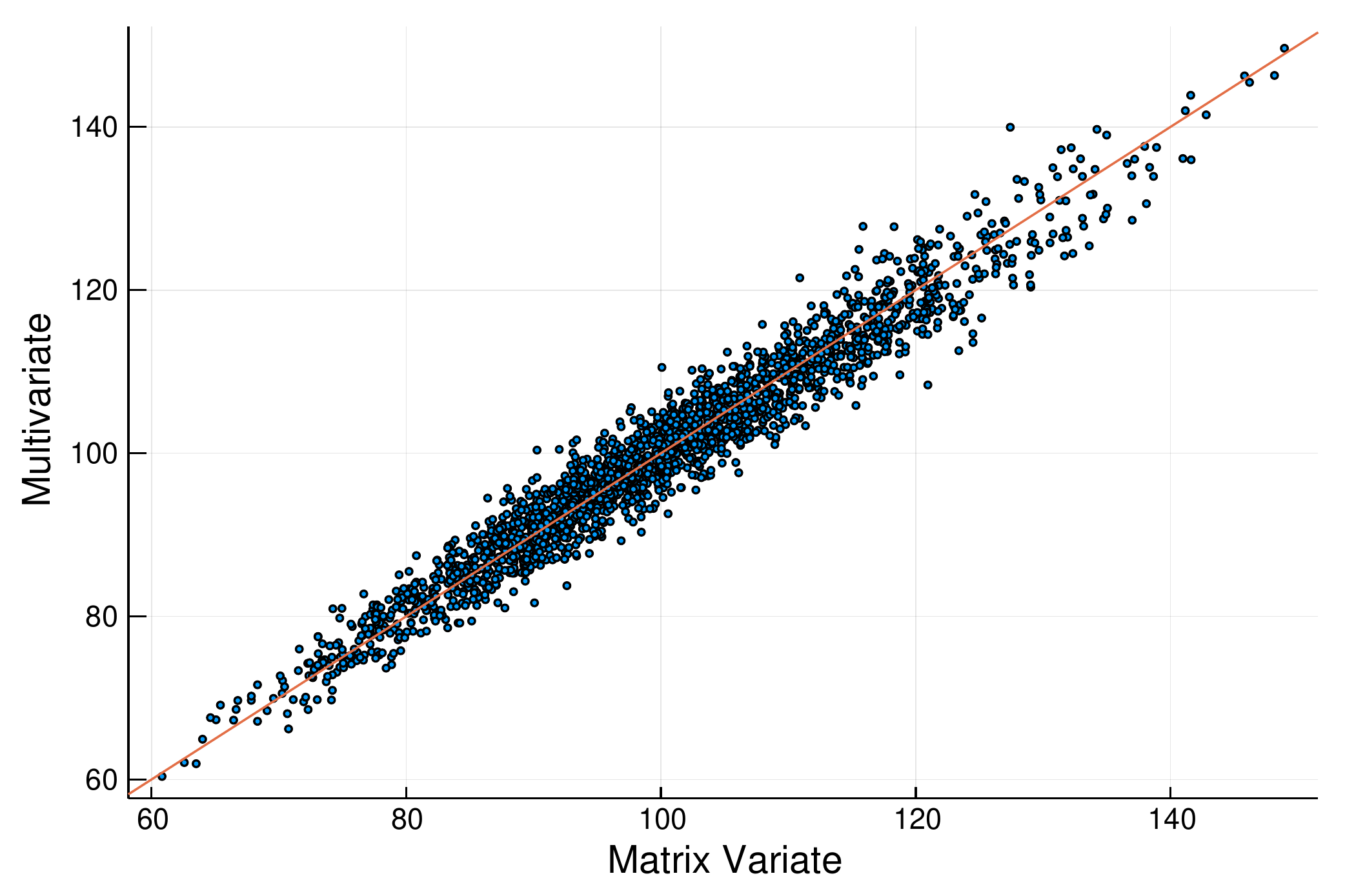}
        \label{fig:ddplot2000M}
    \end{subfigure}%
    ~ 
    \begin{subfigure}[t]{0.45\textwidth}
        \centering
        \caption{$N=2000, p=100$}
        \includegraphics[width=\textwidth]{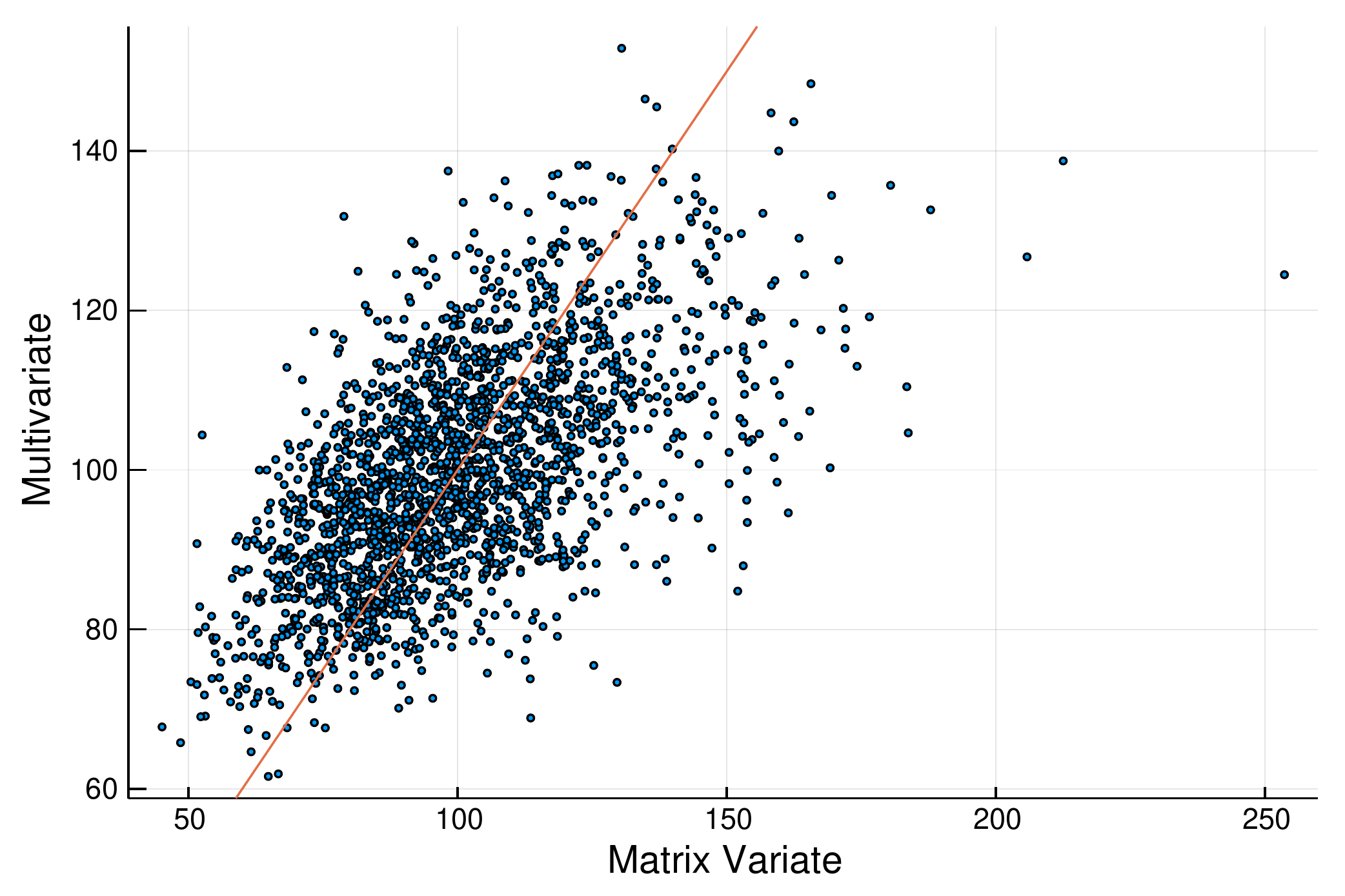}
        \label{fig:ddplot2000}
    \end{subfigure}
    \vspace{-0.15in}
    
    \begin{subfigure}[t]{0.45\textwidth}
        \centering
        \caption{$N=10000, p=100$}
        \includegraphics[width=\textwidth]{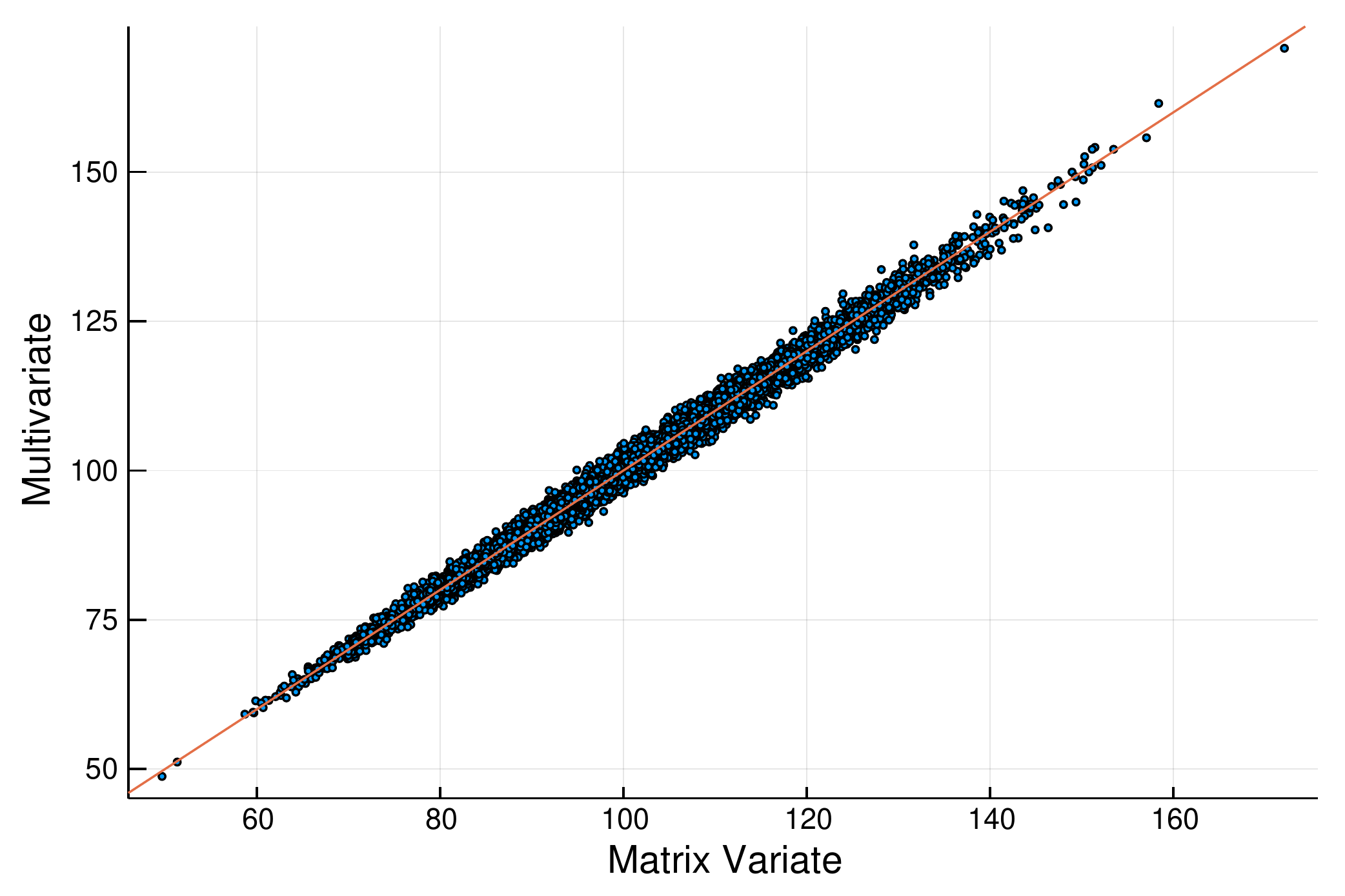}
        \label{fig:ddplot10000M}
    \end{subfigure}%
    ~ 
    \begin{subfigure}[t]{0.45\textwidth}
        \centering
        \caption{$N=10000, p=100$}
        \includegraphics[width=\textwidth]{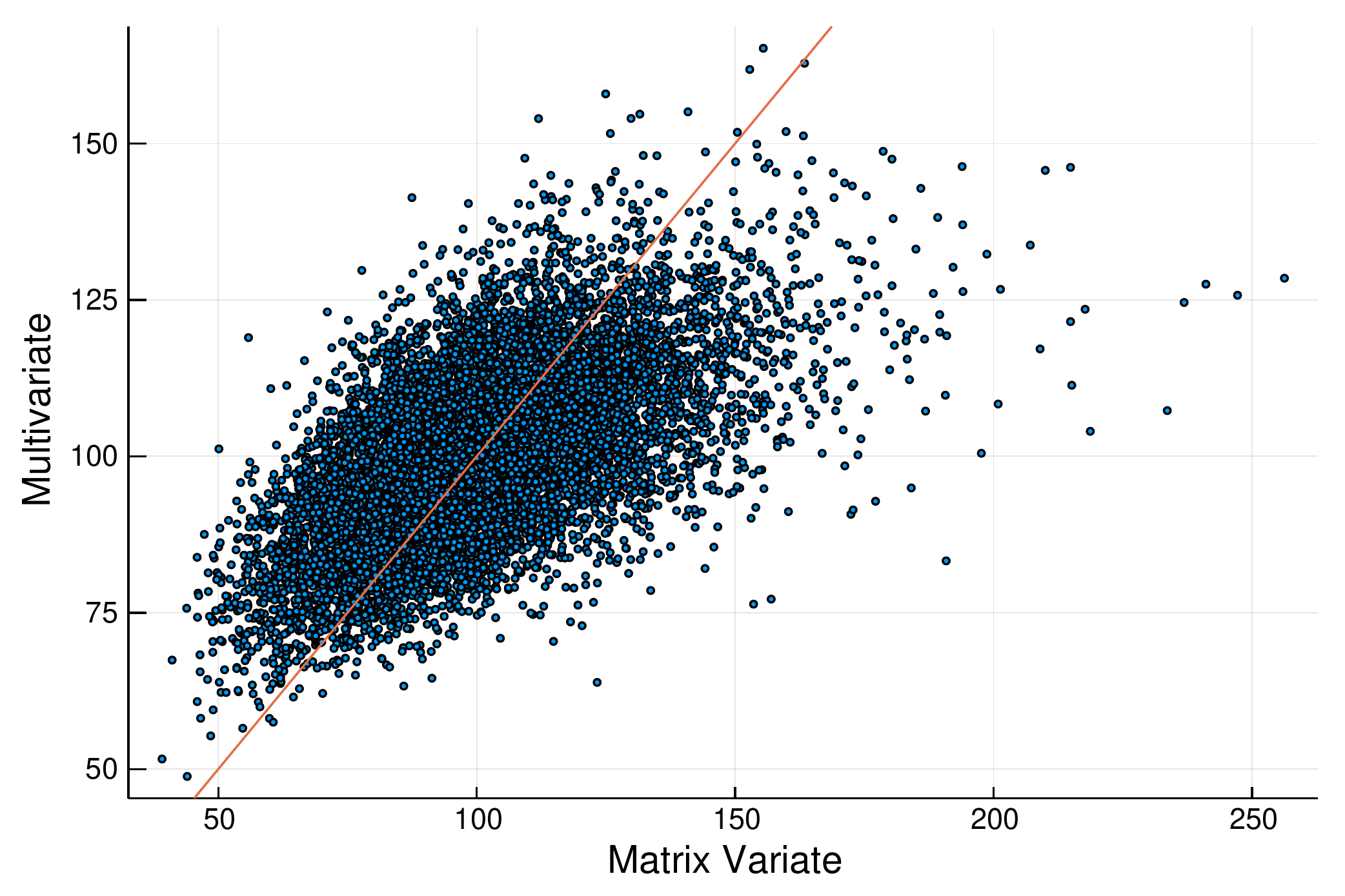}
        \label{fig:ddplot10000}
    \end{subfigure}
    \vspace{-0.15in}
        
    \caption{DD plots for simulated data with $N\in \{500,2000,10000\}$ and $p=100$, where matrix variate normal structure is present (left) or absent (right).}
    \label{fig:Nvary}
\end{figure*}

\subsection{Simulation study for KS test}

An investigation into the viability of the KS test for matrix variate normality is performed under various settings. The simulation study is conceived as follows. For a particular sample size $N$, and dimension $p$, $500$ randomly generated matrix variate datasets are subjected to the KS test. We are concerned with the Type 1 error i.e. the rejection of matrix variate normality when it is indeed present in the dataset. Out of the $500$ datasets of each sample size $N$ and dimension $p$, we simply calculate the proportion of times we reject the null hypothesis. We then further increase the sample size by $5$ observations and repeat the same experiment over all specified settings of dimensions. Figure~\ref{TYPE1} shows the performance of the KS test for various settings. We see as the sample size increases, Type 1 error is minimized. Furthermore we observe as the dimension increases there is a need for a larger sample size for the test to maintain performance. 
\begin{figure}[!ht]
\begin{center}
\includegraphics[scale=0.6]{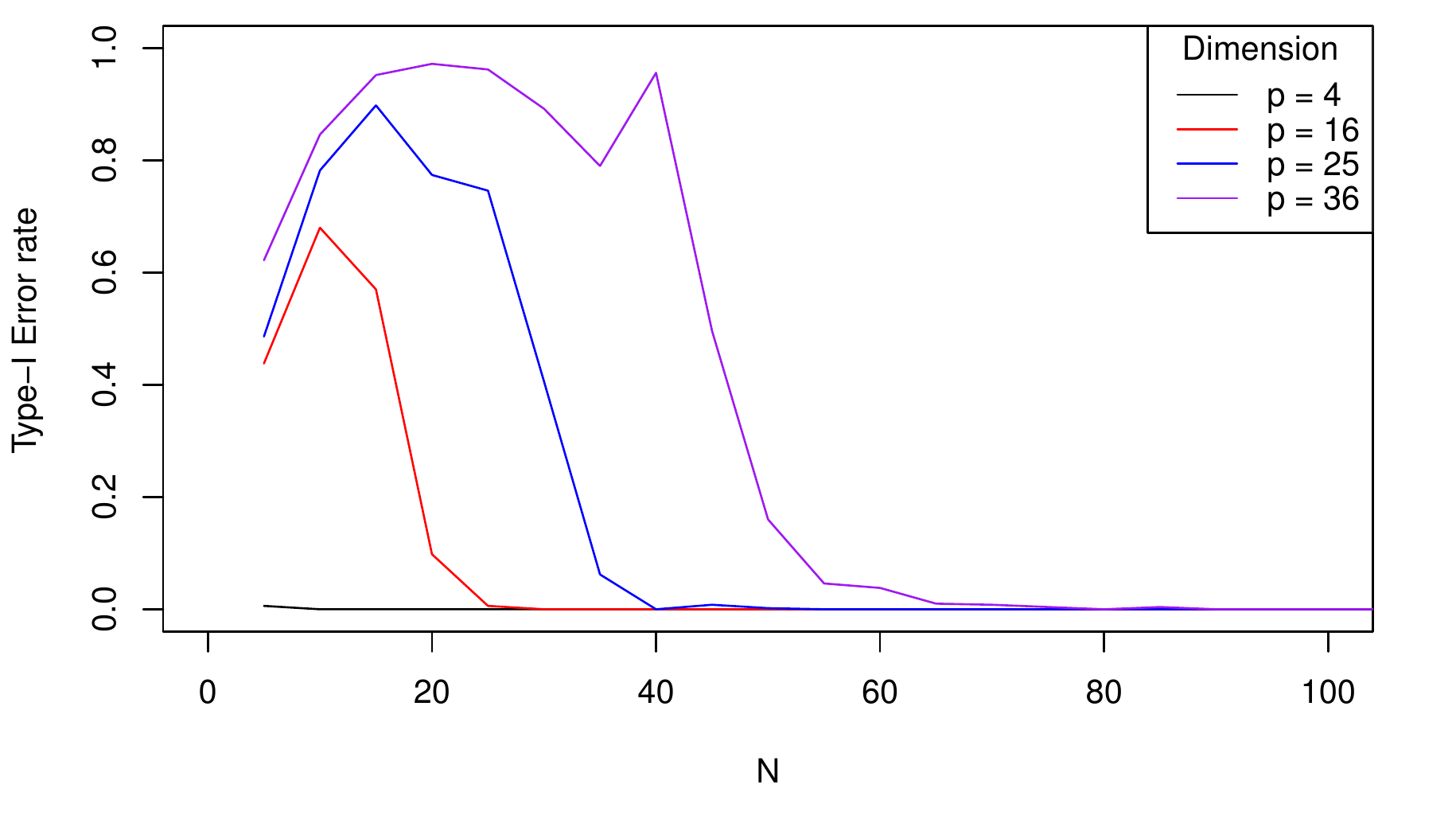}
\vspace{-0.2in}
\caption{Type 1 error measured for varying dimension and sample sizes.}
\label{TYPE1}
\end{center}
\end{figure}

\section{MNIST Handwritten Digits}

Our goal here is to consider whether the assumption of matrix variate normality holds for the MNIST handwritten digits dataset \citep{lecun98}. Note that the sample size for each digit class is sufficiently large, with $N>5000$ observations, and thus should be quite sufficient in estimating all parameters. Suppose each image is matrix variate normal distributed with appropriately sized dimension parameters $28\times28$. Parameter estimation can be performed using the estimates \eqref{e1}, \eqref{e2}, \eqref{e3}. In Figure~\ref{fig:mnistDD}, we consider the DD plots of digits $2$, $8$, and mixtures of $3$, $7$, and $1$. We report that the DD plots exhibit a pattern that does not resemble any of the plots from the simulation. In fact most of the plots exhibit extreme skewed distances such as in Figure \ref{fig:mnist2}.  When attempting to estimate parameters for a single distribution under a mixture setting, we notice the same skewed structure for all three digit classes in Figure \ref{fig:mnist37} and  \ref{fig:mnist71} . We also report that the KS test rejects the null hypothesis for all of the aforementioned settings in Figure~\ref{fig:mnistDD}. Given all the evidence, we conclude that the assumption of matrix variate normality is violated for the MNIST dataset. Interestingly, previous work has shown that skewed matrix variate models outperform the matrix variate normal approach in clustering and classification of these data \citep{mpb}.
\begin{figure*}[!htb]
\centering
	  \begin{subfigure}[t]{0.45\textwidth}
        \centering
        \caption{Digit $2$}
        \includegraphics[width=\textwidth]{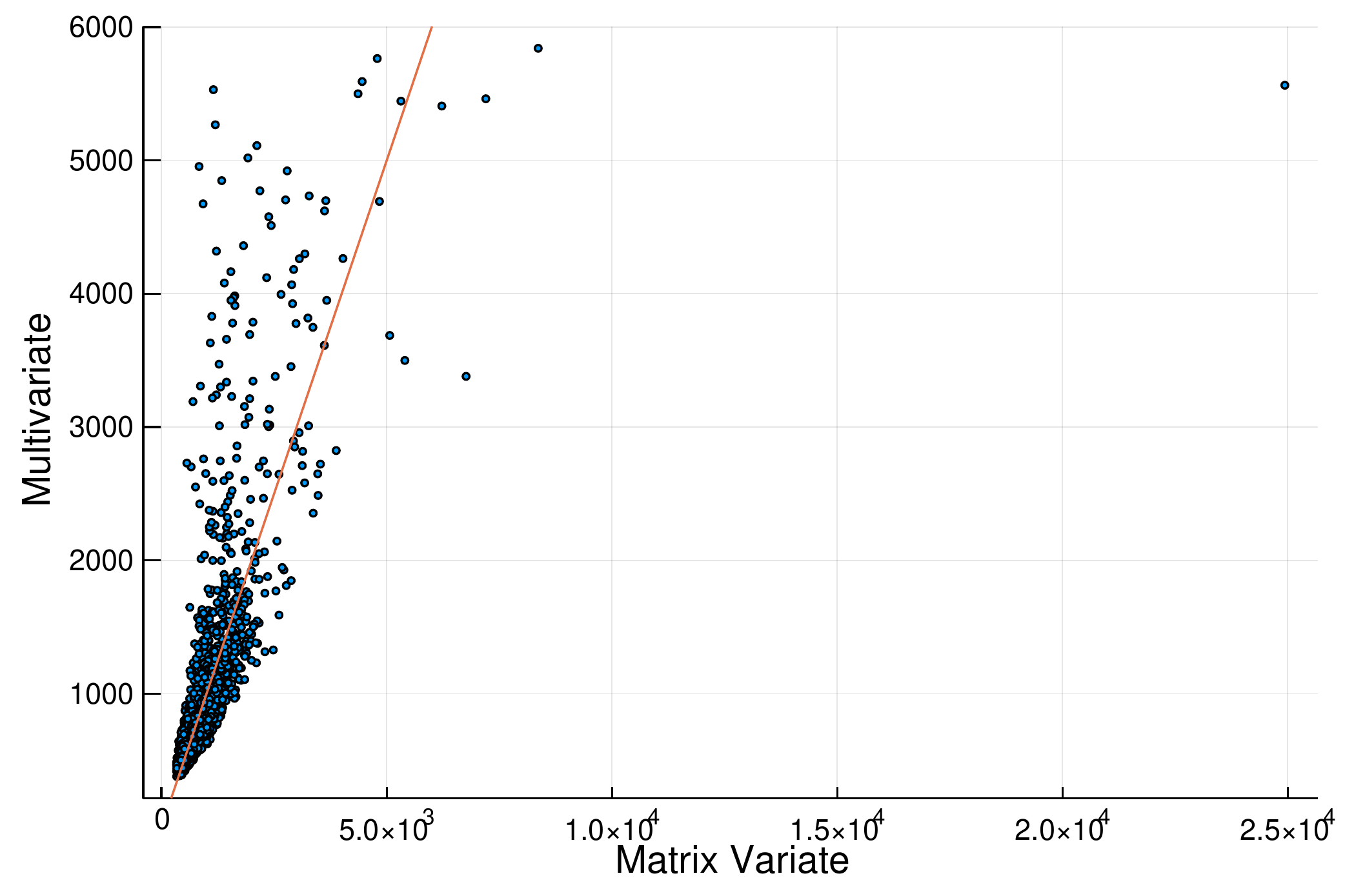}
        \label{fig:mnist2}
    \end{subfigure}%
        ~ 
    	  \begin{subfigure}[t]{0.45\textwidth}
        \centering
        \caption{Digit $8$}
        \includegraphics[width=\textwidth]{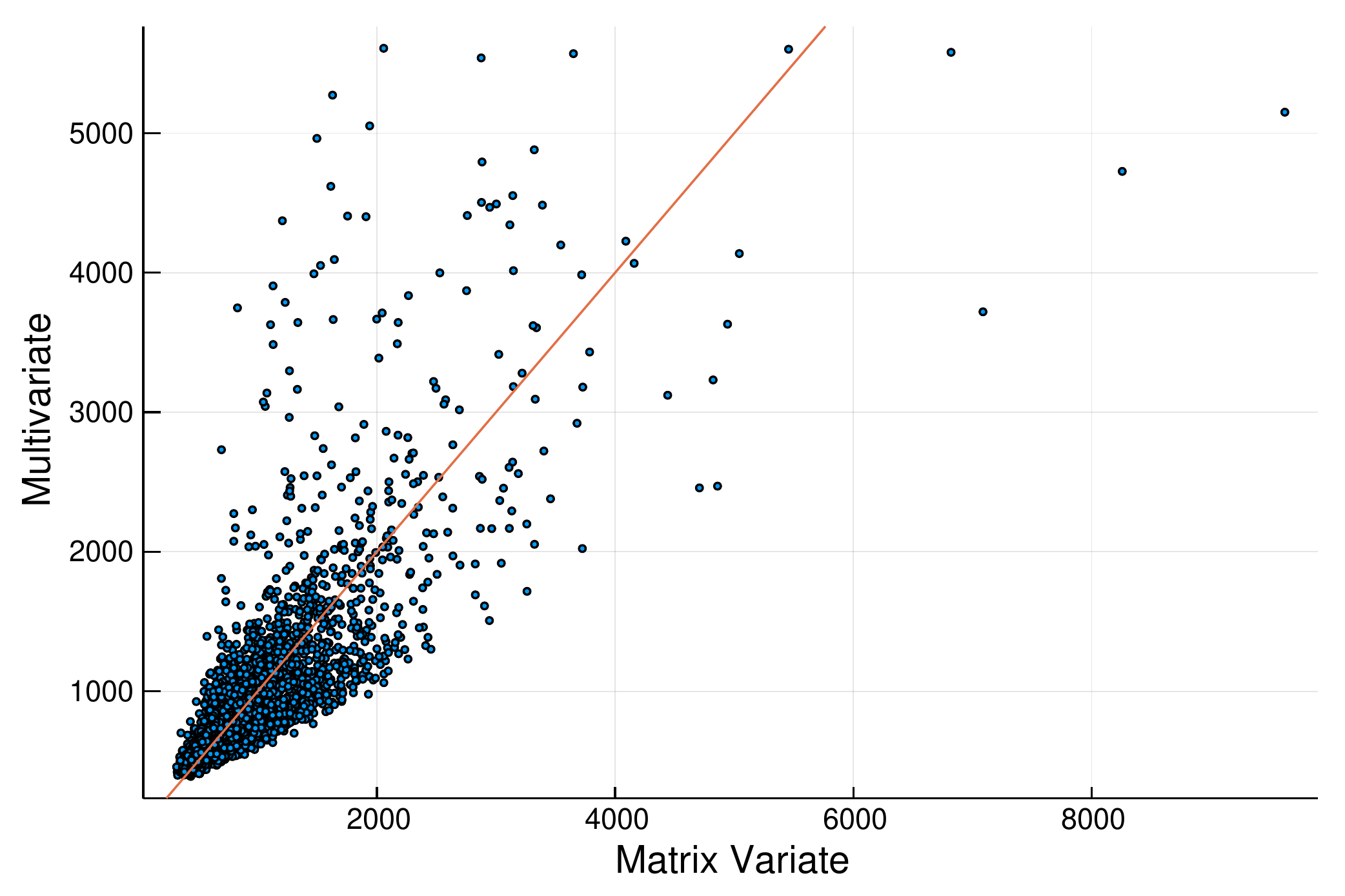}
        \label{fig:mnist8.png}
    \end{subfigure}\newline
    \vspace{-0.15in}
        
    \begin{subfigure}[t]{0.45\textwidth}
        \centering
        \caption{Mix of digits $3$ (blue) and $7$ (red)}
        \includegraphics[width=\textwidth]{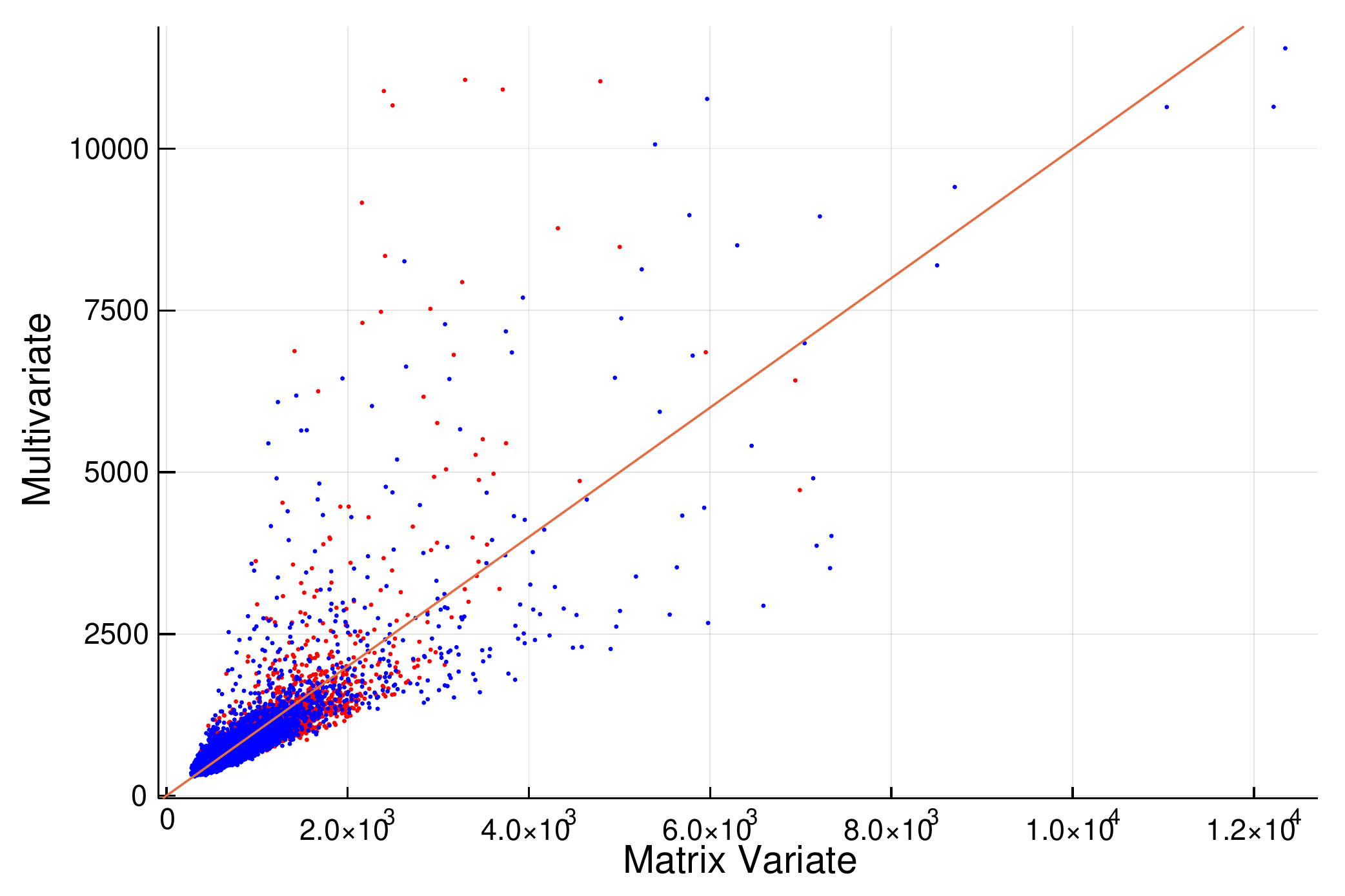}
        \label{fig:mnist37}
    \end{subfigure}%
    ~ 
    \begin{subfigure}[t]{0.45\textwidth}
        \centering
        \caption{Mix of digits $3$ (red)  , $7$ (blue), $1$ (green)}
        \includegraphics[width=\textwidth]{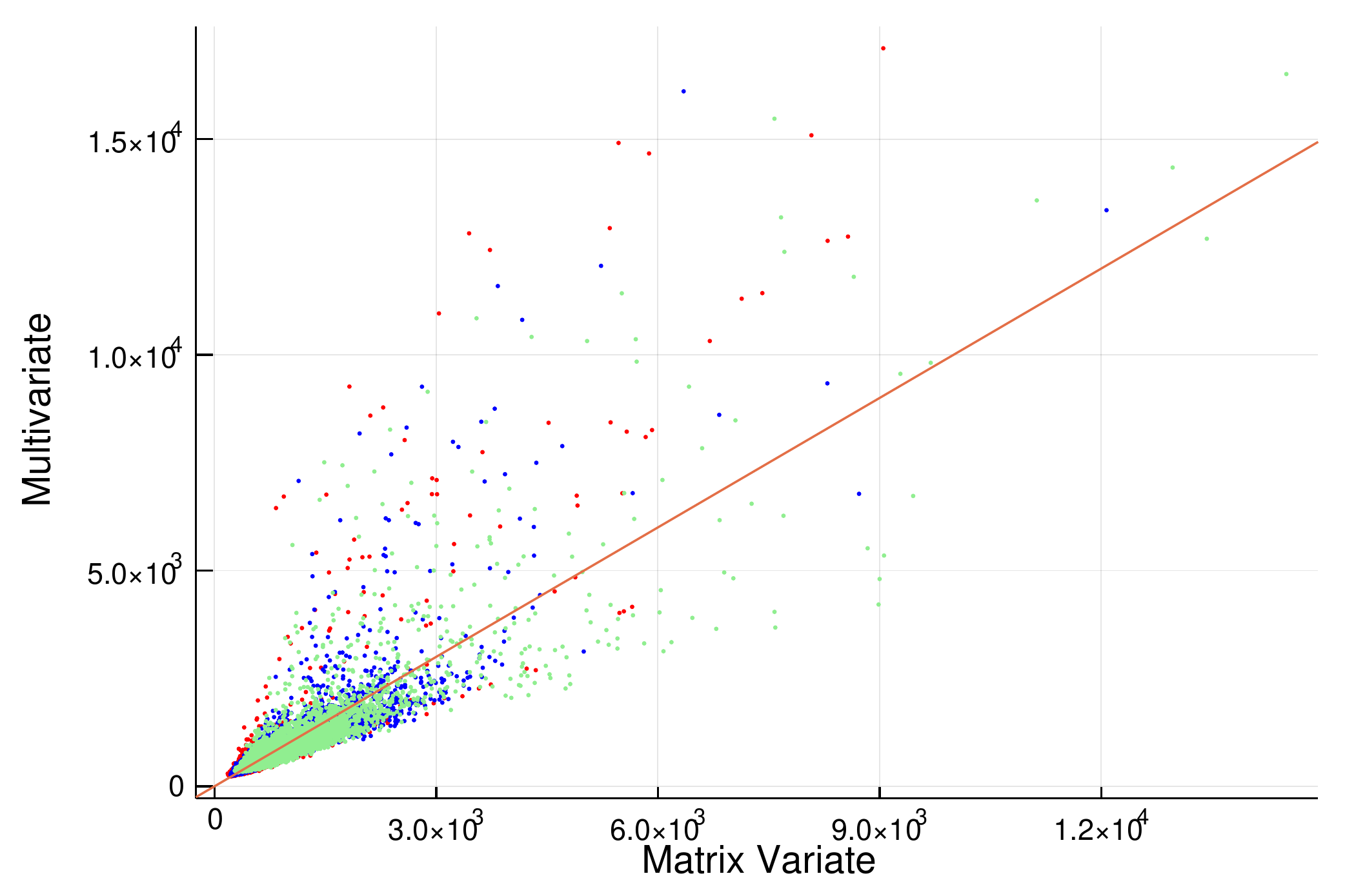}
        \label{fig:mnist71}
    \end{subfigure}
    \vspace{-0.15in}
        
        \caption{DD plots for MNIST dataset for several settings and digits.}
    \label{fig:mnistDD}
 \end{figure*}	

\section{Summary}
A framework for assessing matrix variate normality, both visually and using a statistical test, has been introduced. The new graphical technique for assessing matrix variate normality, called the DD plot, is based on comparing the respective MSDs. The DD plot was shown to be effective for assessing matrix variate normality for various dimensions and sample sizes. A KS test was also considered for a test of matrix variate normality, and we conducted a simulation to assess the Type 1 error for different sample sizes and dimensions. When applied to the MNIST example, the DD plots suggested that the assumption of matrix variate normality is violated. Moreover, the KS test rejected the hypothesis of matrix variate normality for each of the settings in Figure~\ref{fig:mnistDD}. This is not too surprising as the results in \cite{mpb} suggest the use of matrix variate skewed distributions results in better performance for this dataset. In conclusion, the DD plots along with the KS test constitutes a powerful combination for assessing matrix variate normality. The main avenue for future work is to extend the MSD to the space of skewed matrix variate data.
{\small

\section*{Acknowledgements}
This work was supported by a Vanier Canada Graduate Scholarship (Gallaugher), a Canada Graduate Scholarship (Clark), the Canada Research Chairs program (McNicholas), and an E.W.R~Steacie Memorial Fellowship.


\appendix
{\small
\section{Continuous Mapping Theorem}\label{app:cmt}
The continuous mapping theorem was first proved in 1943 and is sometimes referred to as the Mann-Wald theorem \citep{contMap}.
\begin{theorem}\label{thm:cmt}
Let $\{X_N\}$, $\{Y_N\}$, $X$, and $Y$ be random elements on some metric space $S$. In addition, let $g$ be a bivariate continuous map from one metric space $S$ to another $S'$. Then, 
$$X_N,Y_N \overset{P}{\longrightarrow} X,Y \ \ \Rightarrow \ \ g(X_N,Y_N) \overset{P}{\longrightarrow} g(X,Y).$$ 
\end{theorem}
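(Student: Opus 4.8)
The plan is to establish this classical bivariate continuous mapping theorem (for convergence in probability) by reducing it to a one-variable statement on the product space and then invoking the subsequence principle for convergence in probability. First I would replace the pair $(X_N,Y_N)$ by a single random element: equip $S\times S$ with a metric $\rho$ metrizing the product topology, for instance $\rho\big((a,b),(c,d)\big)=d(a,c)\vee d(b,d)$ with $d$ the metric on $S$, and set $Z_N=(X_N,Y_N)$, $Z=(X,Y)$. Since $\rho(Z_N,Z)=d(X_N,X)\vee d(Y_N,Y)$, the hypothesis $X_N,Y_N\overset{P}{\longrightarrow}X,Y$ is equivalent to $Z_N\overset{P}{\longrightarrow}Z$ in $(S\times S,\rho)$ --- note that, unlike for convergence in distribution, coordinatewise convergence in probability already forces joint convergence in probability, so there is no ambiguity in how one reads the hypothesis. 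Continuity of $g$ makes it Borel measurable from $S\times S$ into $S'$, so $g(Z_N)$ and $g(Z)$ are genuine $S'$-valued random elements, and the theorem reduces to the claim: if $Z_N\overset{P}{\longrightarrow}Z$ and $g$ is continuous, then $g(Z_N)\overset{P}{\longrightarrow}g(Z)$.

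The core ingredient is the subsequence characterization of convergence in probability: a sequence of random elements $W_N$ converges in probability to $W$ if and only if every subsequence $\{W_{N_k}\}$ admits a further subsequence $\{W_{N_{k_j}}\}$ with $W_{N_{k_j}}\to W$ almost surely. Granting this (the forward direction is the standard Borel--Cantelli argument, choosing indices so that $P\{d'(W_{N_{k_j}},W)>2^{-j}\}<2^{-j}$; the reverse is a short argument by contradiction), the theorem follows quickly. Take an arbitrary subsequence of $\{g(Z_N)\}$; along the corresponding indices one has $Z_{N_k}\overset{P}{\longrightarrow}Z$, so there is a further subsequence with $Z_{N_{k_j}}\to Z$ almost surely, and on that probability-one event continuity of $g$ gives $g(Z_{N_{k_j}})\to g(Z)$ pointwise, hence almost surely. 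Thus every subsequence of $\{g(Z_N)\}$ has a further subsequence converging almost surely --- in particular in probability --- to $g(Z)$, and the characterization yields $g(Z_N)\overset{P}{\longrightarrow}g(Z)$.

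I expect the only genuine obstacles to be bookkeeping rather than mathematical depth: verifying that $\rho$ really metrizes the product topology (so the hypothesis transfers cleanly), that continuity of $g$ gives measurability of $g(Z_N)$, and that the subsequence principle is stated for random elements valued in a general metric space, not merely real-valued ones. A more self-contained alternative avoids the subsequence lemma by a direct $\varepsilon$--$\delta$ argument: for fixed $\varepsilon>0$ and $\delta>0$, split the event $\{d'(g(Z_N),g(Z))>\varepsilon\}$ according to whether $Z$ lies in the ``bad set'' $B_\delta=\{z\in S\times S:\ d'(g(z),g(z'))>\varepsilon\text{ for some }z'\text{ with }\rho(z,z')\le\delta\}$; off $B_\delta$ the event is contained in $\{\rho(Z_N,Z)>\delta\}$, whose probability tends to $0$, while $P\{Z\in B_\delta\}\downarrow P\{Z\in\bigcap_{\delta>0}B_\delta\}=0$ since $g$ is continuous. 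That route trades the subsequence lemma for having to check measurability of $B_\delta$ and the legitimacy of the monotone limit, so I would favour the subsequence proof in the write-up.
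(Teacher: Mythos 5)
Your proposal is correct, but there is nothing in the paper to compare it against: the paper states this continuous mapping theorem in Appendix~A purely as a classical result, attributing it to Mann and Wald (1943), and gives no proof at all. Your argument is the standard self-contained one. The reduction to a single random element $Z_N=(X_N,Y_N)$ on the product space with the max metric is the right way to make sense of the paper's slightly informal ``$X_N,Y_N\overset{P}{\longrightarrow}X,Y$'' hypothesis, and your observation that coordinatewise convergence in probability already gives joint convergence in probability (via the union bound on $\{d(X_N,X)\vee d(Y_N,Y)>\varepsilon\}$) is exactly the point that makes the bivariate statement a corollary of the univariate one. The subsequence route and the direct $\varepsilon$--$\delta$ route are both valid; the only technical caveats are the ones you already flag, namely measurability of quantities like $d'\bigl(g(Z_N),g(Z)\bigr)$ and of the bad sets $B_\delta$, which are harmless under the separability assumptions implicit in the paper's setting (everything there is finite-dimensional Euclidean). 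In the direct argument, note that $\bigcap_{\delta>0}B_\delta$ is precisely the set of discontinuity points of $g$, which is empty here, so $P\{Z\in B_\delta\}\downarrow 0$ by continuity from above; and if one worries about measurability of $B_\delta$, replacing ``$\rho(z,z')\le\delta$'' by a strict inequality makes $B_\delta$ open. Either write-up would be a strictly stronger contribution than what the paper provides for this statement.
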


\section{Relationship Between $\mathcal{D}$ and $\mathcal{D}_M$}\label{app:trivial}
Let $\bm{\mu} =  \text{vec}({\MM}) $ and  $\bm{\Sigma} =  \MV \otimes \MU$, then
\begin{align*}
\mathcal{D}(\bm{X}_i, \bm{\mu}, \bm{\Sigma} )   
&=   \big(\text{vec}(\bm{X}_i) -  \bm{\mu}\big)^{\top} \bm{\Sigma}^{-1} \big( \text{vec}(\bm{X}_i) -  \bm{\mu}\big)\\
&= \left\{\text{vec}(\bm{X}_i) -  \text{vec}({\MM})\right\}^{\top} (\MV \otimes \MU)^{-1} \left\{\text{vec}(\bm{X}_i) -  \text{vec}({\MM})\right\}\\
&= \text{vec}\left(\bm{X}_i - \MM\right)^{\top}
\left(\MV ^{-1}\otimes\MU^{-1}\right)\text{vec}\left(\bm{X}_i - \MM\right) \\
&= \text{vec}\left(\bm{X}_i - \MM\right)^{\top}
\text{vec}\left\{\MU^{-1} (\bm{X}_i - \MM)\MV ^{-1}\right\}\\
&= \text{tr}\left\{ \MV ^{-1} (\bm{X}_i - \MM)^{\top} \MU^{-1} (\bm{X}_i - \MM) \right\} \\
& = \text{tr}\left\{\MU^{-1}(\bm{X}_i - \MM)\MV^{-1}(\bm{X}_i - \MM)^{\top} \right\} \\
& 
= \mathcal{D}_M(\bm{X}_i, \MM, \MU, \MV) .
\end{align*}

}

\end{document}